	\pgfplotsset{compat=newest}
\newtheorem{theorem}{Theorem}[section]
\newtheorem{lemma}[theorem]{Lemma}
\theoremstyle{remark}\newtheorem{remark}[theorem]{Remark}
\newcommand{\abs}[1]{\left\lvert#1\right\rvert}
\DeclarePairedDelimiter\ave{\langle}{\rangle}
\newcommand{\cL}{\mathcal{L}}
\newcommand{\E}{\mathbb{E}}
\newcommand{\cI}{\mathcal{I}}
\newcommand{\cH}{\mathcal{H}}
\newcommand{\pr}[1]{{}^\prime\!#1}
\newcommand{\R}{\mathbb{R}}
\newcommand{\ucontr}{u}
\newcommand{\Var}{\operatorname{Var}}
\begin{document}
\title{Model-based assessment of the impact of driver-assist vehicles using kinetic theory}

\author{Benedetto Piccoli\thanks{Department of Mathematical Sciences, Rutgers University - Camden, Camden NJ, USA} \and
		Andrea Tosin\thanks{Department of Mathematical Sciences ``G. L. Lagrange'', Politecnico di Torino, Torino, Italy} \and
		Mattia Zanella\thanks{Department of Mathematics ``F. Casorati'', University of Pavia, Pavia, Italy}
		}
\date{}

\maketitle

\begin{abstract}
In this paper we consider a kinetic description of follow-the-leader traffic models, which we use to study the effect of vehicle-wise driver-assist control strategies at various scales, from that of the local traffic up to that of the macroscopic stream of vehicles. We provide a theoretical evidence of the fact that some typical control strategies, such as the alignment of the speeds and the optimisation of the time headways, impact on the local traffic features (for instance, the speed and headway dispersion responsible for local traffic instabilities) but have virtually no effect on the observable macroscopic traffic trends (for instance, the flux/throughput of vehicles). This unobvious conclusion, which is in very nice agreement with recent field studies on autonomous vehicles, suggests that the kinetic approach may be a valid tool for an organic multiscale investigation and possibly design of driver-assist algorithms.

\medskip

\noindent{\bf Keywords:} Follow-the-leader models, Boltzmann-type equation, Fokker-Planck asymptotics, hydrodynamic limit \\

\noindent{\bf Mathematics Subject Classification:} 35Q20, 35Q84, 35Q93, 90B20
\end{abstract}

\section{Introduction}
The last decade has been characterised by an increasing level of automation in decision processes, thanks to a deep technological development and algorithmic sensing. Nowadays, the coexistence of human and computer-based assistance is one of the main goals in the context of smart cities and vehicular dynamics, due to its potential ability to mitigate dangerous practices and to dissipate congestions created by the behavioural responses of the drivers, see e.g.~\cite{hoogendoorn2014TRR,riostorres2017IEEE}. As an example, we mention the Advanced Driver-Assistance Systems (ADAS), which constitute a real interface between human drivers and machine-based decision making. 

Since motor vehicles are commonly durable goods, decisions by policy makers in this market are expected to impact over a time horizon of several years. On the other hand, a full penetration of the new technologies is unrealistic in the near future, even in those markets with high consumer demand, see e.g.~\cite{schoettle2015TR}. Therefore, it is of interest to study transient regimes, in which the penetration rate of driver-assist vehicles is quite small, say in the benchmark range $5\%$ to $10\%$. Remarkably enough, field experiments recently confirmed the enhancement produced by suitable ADAS-type protocols in the regularisation of the stream of vehicles, even for small penetration rates. At the same time, virtually no observable effect was measured at the level of vehicle flux (throughput in the engineering literature), which seems to remain unchanged also after the activation of automatic control procedures, see~\cite{stern2018TRC}. 

In order to get a deeper understanding and mastery of these issues, it is fundamental to complement the experimental observations with theoretical insights into the links between the elementary vehicle-wise dynamics and their observable effects at larger scales. It is therefore of interest to set up model-based procedures able to assess the impact of ADAS-type technologies at different scales, starting from the control of vehicle-to-vehicle interactions up to the quantification of the resulting aggregate effect. In this paper we propose to ground such a multiscale analysis on the sound mathematical-physical framework of the kinetic theory, which has now a quite consolidated tradition also in the modelling of vehicular traffic. Far from claiming to be exhaustive, we recall here the pioneering works~\cite{paveri1975TR,prigogine1971BOOK} together with modern kinetic methods~\cite{delitala2007M3AS,herty2010KRM,illner2003CMS,klar1997JSP} and recent advancements~\cite{dimarco2019JSP-b,herty2018SIAP,tosin2018IFAC,tosin2019MMS,visconti2017MMS}. Taking advantage of a statistical approach, kinetic-type equations naturally link the microscopic dynamics of few interacting agents representative of a much larger group to the observable macroscopic trends of the system as a whole. Therefore, they provide an effective framework to pass from the scale of single vehicles, where behavioural ``forces'' and ADAS-type controls are defined, to consistent hydrodynamic equations describing the aggregate traffic flow under the action of implementable vehicle-wise controls. Such an organic multiscale analysis of cruise control strategies is the hallmark of this work compared to other recent approaches, which either focus on single-scale descriptions, see e.g.~\cite{delis2015CMA,delis2018TRR,ntousakis2015TRP,stern2018TRC}, or propose a heuristic embedding of point particles, representing the driver-assist vehicles, in a continuous flow of standard vehicles, see e.g.~\cite{dellemonache2019HAL,garavello2019ARXIV}. It is worth mentioning that control problems in connection with kinetic equations are an active research line, which has been recently investigated in the context of self-organisation and games to reproduce competitive scenarios and to force the emergence of patterns, see~\cite{albi2015CMS,albi2014PTRSA,albi2019JSP}. Moreover, links with mean field optimal control problems and performance bounds have also been investigated, see~\cite{albi2017AMO,herty2017DCDS}.

As usual in the kinetic approach, the first step consists in defining the microscopic model of the interactions among the vehicles. Rather than postulating it from scratch, here we take advantage of the celebrated Follow-the-Leader (FTL) class of microscopic traffic models~\cite{gazis1961OR}, which, thanks to their simplicity and versatility, had a strong impact on the transportation engineering community. An FTL traffic model describes one-directional dynamics of a platoon of vehicles, assuming that overtaking is not possible and that, at each time, a vehicle regulates its speed only on the basis of the relative speed and distance from its leading vehicle. This description implies pairwise interactions, which are particularly suited to a kinetic approach. Next, we complement the FTL dynamics of a few vehicles with a control term designed in such a way to induce the controlled vehicles to keep a recommended distance from their leading vehicles while smoothing the speed gaps. Such control actions are motivated by the general idea of dampening traffic instabilities, such as e.g. stop-and-go waves, which are indeed produced by inhomogeneous distributions of the headways (viz. the distance from the leading vehicle) and the speeds in the traffic stream. We point out that, thanks to underlying FTL model, controlling the headway turns out to be equivalent to controlling the time headway, which is actually more customary in the transportation engineering practice. The resulting control problem can be explicitly solved for a generic pair of interacting vehicles, taking into account probabilistically that one of them may or may not be controlled. The solution is a totally decentralised binary feedback control, which can be straightforwardly embedded in a Boltzmann-type kinetic equation with localised interactions. Then we are in a position to investigate the distributed impact of the aforementioned control strategies on the whole population of vehicles. In particular, we prove that these strategies reduce the local dispersion of both the headway and the speed distribution and we also quantify such a reduction in terms of the penetration rate of ADAS-type vehicles and the level of traffic congestion.

As a final step, we investigate the large scale, viz. macroscopic, impact of the introduced binary control strategies by means of a suitable hydrodynamic limit based on the relaxation of the system towards the local equilibrium distribution of the kinetic equation (the ``Maxwellian'' in the jargon of the classical kinetic theory). It is worth pointing out that the derivation of hydrodynamic equations in the non-classical framework of multi-agent systems, see e.g.~\cite{carrillo2017BOOKCH,degond2008M3AS,duering2007PHYSA,pareschi2019JNS,shu2019ARXIV}, is a still underexplored topic, due to the general lack of information about the large time trends of such systems. In this work we show that such an information may actually be recovered for the system at hand, at least in a particular regime of the parameters of the microscopic interactions. On this basis, we provide an organic method to derive macroscopic traffic equations consistent with microscopic models of vehicle-wise control. In particular, the macroscopic model that we obtain is a conservation law for the traffic density, whose flux is directly obtained from the aforementioned relaxation towards the local kinetic equilibrium and is explicitly parametrised by the penetration rate of the ADAS-type vehicles. This allows us to prove that the flux/throughput exhibits indeed a very mild dependence on the penetration rate, thereby providing a theoretical support to the experimental observations mentioned at the beginning.

In more detail, the paper is organised as follows. In Section~\ref{sect:FTL_micro} we introduce the general class of FTL microscopic dynamics and we derive therefrom a consistent binary interaction scheme expressing the variation of the headway of a vehicle due to an interaction with the leading vehicle. In Section~\ref{sect:control} we embed in the FTL-inspired binary dynamics a vehicle-wise control, designed in such a way to enforce a safety distance and align the speed of a vehicle to that of the leading vehicle. In Section~\ref{sect:boltzmann.hydro} we consider an inhomogeneous Boltzmann-type kinetic description of the controlled binary interaction dynamics, that we use to study the impact of the vehicle-wise control on some local traffic features such as the headway, speed and time headway distributions as well as the headway and speed variance. Moreover, in a suitable hydrodynamic regime we deduce a first order macroscopic traffic model incorporating the action of the introduced vehicle-wise control. In Section~\ref{sect:numerics} we report several numerical tests, which show the consistency of our hierarchical approach. Finally, in Section~\ref{sect:conclusions} we summarise the main results of the paper and we briefly sketch possible follow-ups.

\section{Follow-the-Leader-inspired binary interactions}\label{sect:FTL_micro}
In their general formulation, Follow-the-Leader (FTL) traffic models can be written as:
\begin{equation}
	\begin{cases}
		\dot{x}_i=v_i \\
		\dot{v}_i=\dfrac{av_i^m}{\left(x_{i+1}-x_i\right)^n}\left(v_{i+1}-v_i\right),
	\end{cases}
	\label{eq:FTL}
\end{equation}
cf.~\cite{gazis1961OR}, where $x_i,\,x_{i+1}\in\R$ are the positions of two consecutive vehicles in the traffic stream, $v_i,\,v_{i+1}\geq 0$ are their speeds and the factor $av_i^m/\left(x_{i+1}-x_i\right)^n$, with $a>0$ and $m,\,n\in\mathbb{N}$, represents the sensitivity of the $i$th driver, i.e. his/her promptness in adapting the speed to that of the leading vehicle.

Denoting by $s_i:=x_{i+1}-x_{i}$ the \textit{headway} between the $i$th and $(i+1)$th vehicles, from~\eqref{eq:FTL} we deduce
\begin{equation}
	\frac{\dot{v}_i}{v_i^m}=a\frac{\dot{s}_i}{s_i^n}
	\label{eq:FTL.v-s}
\end{equation}
whence, assuming $m,\,n>1$,
\begin{equation}
	v_i=\frac{s_i^{\frac{n-1}{m-1}}}{{\left(a\frac{m-1}{n-1}+Cs_i^{n-1}\right)}^\frac{1}{m-1}},
	\label{eq:v-s_general}
\end{equation}
where $C\geq 0$ is an arbitrary integration constant. Notice that $v_i$ is essentially proportional to $s_i^{(n-1)/(m-1)}$ at small headway ($s_i\to 0^+$), namely in dense traffic, while it approaches the value ${(1/C)}^{1/{(m-1)}}$ from below at large headway ($s_i\to +\infty$), namely in free traffic. Therefore, if we assume that $v_i=1$ represents, as usual, the maximum non-dimensional speed of a vehicle, a meaningful choice, which we will henceforth make, is $C=1$.

\begin{figure}[!t]
\centering
\resizebox{11cm}{!}{
\begin{tikzpicture}[
	declare function={
		f(\s,\a)=\s/(\a+\s);
	}
]
\begin{axis}[
	xlabel={$s_i$},
	ylabel={$v_i$},
	xmin=0,
	xmax=20,
	ymin=0,
	ymax=1.1,
	legend style={draw=none},
	legend cell align=left,
	legend pos=outer north east,
]
	\draw [style=dashed,thin] (0,1) -- (20,1);
	\addplot[gray,domain=0:20,smooth,very thick,style={solid},samples=100]{f(x,0.5)}; \addlegendentry{$a=0.5$};
	\addplot[blue,domain=0:20,smooth,very thick,style={dashed}]{f(x,1)}; \addlegendentry{$a=1$}
	\addplot[cyan,domain=0:20,smooth,very thick,style={dash dot}]{f(x,2)}; \addlegendentry{$a=2$}
\end{axis}
\end{tikzpicture}
}
\caption{The relationship~\eqref{eq:v-s} between the headway $s_i$ and the speed $v_i$ plotted for various $a$.}
\label{fig:v-s}
\end{figure}
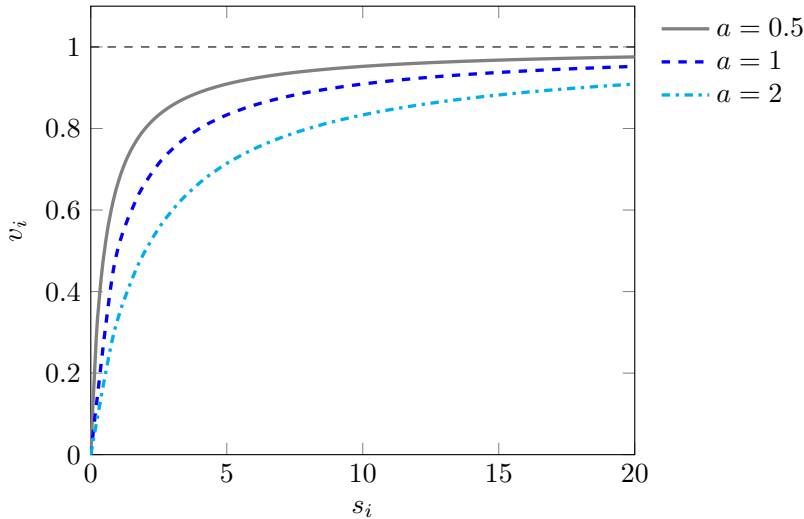

A special class of relationships~\eqref{eq:v-s_general} is obtained for $m=n$:
$$ v_i=\frac{s_i}{{\left(a+s_i^{n-1}\right)}^\frac{1}{n-1}}. $$
whose prototypical case is the one with $n=2$, i.e.
\begin{equation}
	v_i=\frac{s_i}{a+s_i},
	\label{eq:v-s}
\end{equation}
see Figure~\ref{fig:v-s}.

\begin{remark}
Typical FTL models considered in the literature are those obtained from~\eqref{eq:FTL} with $m=0$ and $n=1,\,2$. With these choices of the exponents, instead of~\eqref{eq:v-s} we get from~\eqref{eq:FTL.v-s} the following relationships between the speed and the headway of the vehicles (up to arbitrary integration constants):
$$ v_i=
	\begin{cases}
		a\log{s_i} & \text{if } m=0,\,n=1 \\[1mm]
		1-\dfrac{a}{s_i} & \text{if } m=0,\,n=2.
	\end{cases}
$$
We observe that, in both cases, $s_i$ cannot span the whole $\R_+$: the restrictions $s_i\geq 1$ and $s_i\geq a$, respectively, are needed in order for the speed to be non-negative. Moreover, in the first case $v_i$ is also unbounded above for $s_i\in [1,\,+\infty)$. In this paper, we prefer to consider $m=n=2$, because the resulting relationship~\eqref{eq:v-s} allows for $s_i\in\R_+$ with $v_i$ bounded between $0$ and $1$.
\end{remark}

Taking the difference between the $i$th and the $(i+1)$th equations of the form~\eqref{eq:FTL} with $m=n=2$, and using~\eqref{eq:v-s}, we further get the equation for the headway:
$$ \frac{d}{dt}\left[\dot{s}_i-a\left(\frac{1}{a+s_i}-\frac{1}{a+s_{i+1}}\right)\right]=0, $$
which, in order for the \textit{jammed traffic} state, i.e. $s_i(t)=0$ for all $t\geq 0$ and all $i$, to be an admissible solution, we choose to satisfy with
$$ \dot{s}_i=a\left(\frac{1}{a+s_i}-\frac{1}{a+s_{i+1}}\right). $$
Now, writing $s:=s_i(t)$ and $s_\ast:=s_{i+1}(t)$ for a generic pair of interacting vehicles, we approximate the time derivative in a small time interval $\Delta{t}>0$, understood as the \textit{reaction time} of the drivers, with the forward Euler formula, in the same spirit as~\cite{carrillo2010SIMA}. Since for the development of the theory high values of the parameter $a$ will be significant, it is meaningful to take here $\Delta{t}=\frac{1}{a}$. On the whole, we recover the binary interaction rule
$$ s'=s+\frac{1}{a+s}-\frac{1}{a+s_\ast}, $$
where $s':=s_i(t+\Delta{t})$ is the post-interaction headway.

In order to include in this description some stochastic effects linked to uncontrollable aspects of the dynamics, we add to the previous rule a fluctuation of the form $s\eta$, where $\eta$ is a centred random variable such that
$$ \ave{\eta}=0, \qquad \Var(\eta)=\ave{\eta^2}=:\sigma^2>0, $$
where $\ave{\cdot}$ denotes expectation. The coefficient $s$ ensures that the greater the headway the greater the fluctuation, meaning that the more far apart from each other the vehicles are the more their dynamics are dictated by other factors than mutual interactions.

Thus, the interaction rule that we consider is finally
\begin{equation}
	s'=s+\frac{1}{a+s}-\frac{1}{a+s_\ast}+s\eta.
	\label{eq:binary.s}
\end{equation}
In order to be physically admissible, this rule has to be such that $s'\geq 0$ for all $s,\,s_\ast\geq 0$. Since
$$ s'=\left(1-\frac{1}{(a+s)(a+s_\ast)}+\eta\right)s+\frac{s_\ast}{(a+s)(a+s_\ast)}, $$
we see that such a condition is guaranteed if
$$ 1-\frac{1}{(a+s)(a+s_\ast)}+\eta\geq 0, $$
which, owing to $\frac{1}{(a+s)(a+s_\ast)}\leq\frac{1}{a^2}$, is certainly satisfied if $\eta\geq\frac{1}{a^2}-1$. This implies that the stochastic fluctuations admissible in~\eqref{eq:binary.s} must have a support bounded from the left. At the same time, it is necessary to assume $\frac{1}{a^2}-1<0$, i.e. $a>1$ consistently with the anticipated requirement that $a$ be large, for otherwise $\eta$ cannot have zero mean and strictly positive variance.

Finally, considering that vehicle interactions are anisotropic, and in particular that they happen only with the vehicle in front, the leading vehicle $i+1$ will not modify its headway because of the rear vehicle $i$. Hence $s_\ast'=s_\ast$ as far as the pairwise interaction between the vehicles $i,\,i+1$ is concerned.

On the whole, the binary interaction that we consider is
\begin{equation}
	s'=s+\frac{1}{a+s}-\frac{1}{a+s_\ast}+s\eta, \qquad
		s_\ast'=s_\ast,
	\label{eq:binary}
\end{equation}
with $\eta\geq\frac{1}{a^2}-1$ and $a>1$.

\begin{remark} \label{rem:time_headway}
The \textit{time headway} between two consecutive vehicles is defined as the time gap between their passages through a given section of the road, cf.~\cite{HCM2000TRB}. Using~\eqref{eq:v-s}, we deduce that the time headway $\tau_i$ between the vehicles $i$ and $i+1$ is
\begin{equation}
	\tau_i:=\frac{s_i}{v_i}=a+s_i.
	\label{eq:tau}
\end{equation}
Since $s_i\geq 0$, it results $\tau_i\geq a$, hence the parameter $a$ defines the minimum time headway. Recalling that the reaction time of the drivers is $\Delta{t}=\frac{1}{a}$, we see that $a>1$ is the minimal condition necessary to guarantee $\tau_i>\Delta{t}$, in such a way that drivers have enough time to react to their leading vehicles and avoid car accidents (i.e. negative headways).
\end{remark}

\section{Driver-assist control}
\label{sect:control}
Driver-assist controls are implemented at the level of single vehicles with the aim of enhancing the road safety and improving the global flow of traffic, for instance by dissipating stop-and-go waves responsible for sudden speed variations and large fuel consumption~\cite{stern2018TRC}.

In mathematical terms, this problem can be formalised by modifying the interaction rules~\eqref{eq:binary} as follows:
\begin{equation}
	s'=s+\frac{1}{a+s}-\frac{1}{a+s_\ast}+\Theta\ucontr+s\eta, \qquad s_\ast'=s_\ast, 
	\label{eq:binary.u}
\end{equation}
where $\ucontr$ is an instantaneous control applied by the driver-assist device, which will be deduced from the optimisation of a suitably defined functional, and $\Theta\in\{0,\,1\}$ is a random variable expressing the fact that a generic vehicle may or may not be equipped with driver-assist technologies. In particular, we choose $\Theta\sim\operatorname{Bernoulli}(p)$, where $p\in [0,\,1]$, corresponding to the probability that $\Theta=1$, is the so-called \textit{penetration rate}, namely the percentage of vehicles in the traffic stream equipped with the aforesaid technology.

We design the control so that it has a twofold action:
\begin{itemize}
\item on one hand, it tries to induce a vehicle to maintain a \textit{safety distance} from its leading vehicle. We identify such a distance with a prescribed function $\rho\mapsto s_d(\rho)$ of the traffic density, where the subindex $d$ stands for ``desired'' (distance), and we assume that, during a binary interaction, the control $\ucontr$ acts so as to minimise a suitable non-negative function of the difference $s_d(\rho)-s'$;
\item on the other hand, it tries to align a vehicle speed to the one of the leading vehicle, so as to reduce the necessity of sudden speed variations in case of changes in the traffic flow. Considering that, owing to~\eqref{eq:v-s}, there is a one-to-one relationship between the speed of a vehicle and its headway from the leading vehicle, such a speed alignment is straightforwardly pursued by minimising, during a binary interaction, a suitable non-negative function of the difference $s_\ast'-s'$.
\end{itemize}
We formalise these arguments by introducing the following \textit{binary} functional:
\begin{equation}
	J(s',\,s_\ast',\,\ucontr):=\frac{1}{2}\ave*{\mu{(s_d(\rho)-s')}^2+(1-\mu){(s_\ast'-s')}^2+\nu\ucontr^2},
	\label{eq:J}
\end{equation}
where $\mu\in [0,\,1]$ is a coefficient balancing the bias of the control towards either effect discussed before and $\nu>0$ is a penalisation coefficient, which can be interpreted as the cost of the control. The average $\ave{\cdot}$ is taken with respect to the distribution of the stochastic fluctuation $\eta$ in~\eqref{eq:binary.u}.

\begin{remark}
\begin{enumerate*}[label=(\roman*)]
\item Since there is a one-to-one relationship between $s$ and $v$, the first term of the functional $J$ may be possibly interpreted in the light of the optimal velocity models, c.f.~\cite{bando1995PRE}. Here, the equivalent of a density-dependent optimal speed is the density-dependent optimal headway $s_d(\rho)$.
\item Still concerning the first term of the functional $J$, we observe that real autonomous cruise controls normally do not try to align the headway of a vehicle to a prescribed optimal headway but rather control the time headway. This is because headway alignment encounters specific engineering drawbacks due to internal vehicle dynamics and results in instabilities. Nevertheless, in our model vehicles are point particles and there is a one-to-one relationship between the headway and the time headway, cf.~\eqref{eq:tau}. Moreover, we bypass the inner dynamics of the vehicles for modelling purposes and to allow for kinetic approaches, thus for our model it is equivalent to control either the headway or the time headway.
\end{enumerate*}
\end{remark}

The optimal control $\ucontr^\ast$ is chosen as
$$ \ucontr^\ast:=\operatorname*{arg\,min}_{\ucontr\in\mathcal{U}}J(s',\,s_\ast',\,\ucontr) $$
subject to~\eqref{eq:binary.u}, where $\mathcal{U}=\{u\in\R\,:\,s'\geq 0\}$ is the set of the admissible controls, namely those which guarantee that the post-interaction headway $s'$ in~\eqref{eq:binary.u} is physically consistent.

The minimisation of~\eqref{eq:J} can be done e.g., by forming the Lagrangian
$$ \cL(s',\,\ucontr,\,\lambda):=J(s',\,s_\ast,\,\ucontr)+\lambda\ave*{s'-s-\left(\frac{1}{a+s}-\frac{1}{a+s_\ast}+\Theta\ucontr\right)-s\eta}, $$
where $\lambda\in\R$ is the Lagrange multiplier associated with the first constraint in~\eqref{eq:binary.u}. Instead, the second constraint has been directly forced in $J$. Then the optimality conditions read
\begin{equation*}
	\begin{cases}
		\partial_\ucontr\cL=\nu\ucontr-\gamma\Theta\lambda=0 \\[1mm]
		\partial_{s'}\cL=\mu\ave{s'-s_d(\rho)}+(1-\mu)\ave{s'-s_\ast}+\lambda=0 \\[1mm]
		\partial_\lambda\cL=\ave*{s'-s-\left(\dfrac{1}{a+s}-\dfrac{1}{a+s_\ast}+\Theta\ucontr\right)}=0,
	\end{cases}
\end{equation*}
whence, solving with respect to $\ucontr$, we determine the optimal control
\begin{equation}
	\ucontr^\ast=\frac{\Theta}{\nu+\Theta^2}\left[\mu s_d(\rho)+(1-\mu)s_\ast-s\right]
		-\frac{\Theta}{\nu+\Theta^2}\left(\frac{1}{a+s}-\frac{1}{a+s_\ast}\right)
	\label{eq:ustar}
\end{equation}
in \textit{feedback form}, i.e. as a function of the pre-interaction headways $s$, $s_\ast$. Plugging into~\eqref{eq:binary.u} yields finally the controlled binary interactions
\begin{align}
	\begin{aligned}[c]
		s' &= s+\frac{\nu}{\nu+\Theta^2}\left(\dfrac{1}{a+s}-\dfrac{1}{a+s_\ast}\right)
			+\dfrac{\Theta^2}{\nu+\Theta^2}\left[\mu s_d(\rho)+(1-\mu)s_\ast-s\right]+s\eta \\
		s_\ast' &= s_\ast.
	\end{aligned}
	\label{eq:binary.controlled}
\end{align}

As previously anticipated, the admissibility of the control~\eqref{eq:ustar} is linked to the possibility to guarantee $s'\geq 0$ in~\eqref{eq:binary.controlled} for every choice of $s,\,s_\ast\geq 0$. Writing
\begin{align*}
	s' &= \left(1-\frac{\nu}{\left(\nu+\Theta^2\right)(a+s)(a+s_\ast)}-\frac{\Theta^2}{\nu+\Theta^2}+\eta\right)s
		+\frac{\nu}{\nu+\Theta^2}s_\ast \\
	&\phantom{=} +\frac{\Theta^2}{\nu+\Theta^2}[\mu s_d(\rho)+(1-\mu)s_\ast]
\end{align*}
we see that a sufficient condition for $s'\geq 0$ is
$$ \eta\geq\frac{1}{\nu+\Theta^2}\left(\frac{\nu}{(a+s)(a+s_\ast)}+\Theta^2\right)-1, $$
which, since $\frac{1}{(a+s)(a+s_\ast)}\leq\frac{1}{a^2}$ and considering that $\Theta$ takes only the values $0$, $1$, is further enforced by requiring
$$ \eta\geq\frac{1}{\nu}\left(\frac{\nu}{a^2}+1\right)-1. $$
Therefore, the support of the stochastic fluctuation $\eta$ has to be again bounded from the left and, furthermore,
$$ \frac{1}{\nu}\left(\frac{\nu}{a^2}+1\right)-1<0 $$
for consistency with the requirements $\ave{\eta}=0$, $\ave{\eta^2}>0$. This implies
$$ a>1, \qquad \nu>\frac{a^2}{a^2-1}. $$

Notice that the first restriction coincides with the one imposed in the uncontrolled case~\eqref{eq:binary}. The second restriction, instead, forces a non-zero lower bound on the penalisation coefficient $\nu$, hence in practice it asserts that an admissible control cannot be too cheap.

\section{Boltzmann-type kinetic description and hydrodynamics}
\label{sect:boltzmann.hydro}
A statistical description of a traffic stream composed by vehicles interacting according to the microscopic rules discussed in Section~\ref{sect:control} can be obtained by the methods of kinetic theory. To this purpose, we characterise the microscopic state of a generic vehicle by its position $x\in\R$ along the road and its headway $s\in\R_+$ from the leading vehicle. Next, we introduce the distribution function $f=f(t,\,x,\,s)$, which is such that $f(t,\,x,\,s)\,dx\,ds$ gives, at time $t\geq 0$, the fraction of vehicles located in the interval $[x,\,x+dx]$ with a headway from the leading vehicle comprised between $s$ and $s+ds$.

An evolution equation for $f$ is obtained by appealing to the classical principles of the collisional kinetic theory\footnote{Here and henceforth, according to the classical jargon of kinetic theory, the word ``collision'' and its by-products will be occasionally used as synonyms of ``interaction''.}:
\begin{align}
	\begin{aligned}[b]
		\partial_t\int_{\R_+}\varphi(s)f(t,\,x,\,s)\,ds &+ \partial_x\int_{\R_+}\frac{s}{a+s}\varphi(s)f(t,\,x,\,s)\,ds \\
		&= \frac{1}{2}\E_\Theta\left[\int_{\R_+}\int_{\R_+}\ave{\varphi(s')-\varphi(s)}f(t,\,x,\,s)f(t,\,x,\,s_\ast)\,ds\,ds_\ast\right],
	\end{aligned}
	\label{eq:boltzmann}
\end{align}
where $\varphi$ is any test function representing an \textit{observable quantity}, i.e. a quantity which can be computed out of the knowledge of the microscopic state $s$, $\ave{\cdot}$ denotes, as usual, the average with respect to the distribution of the stochastic fluctuation $\eta$ and $\E_\Theta$ is the expectation with respect to the random variable $\Theta$.

On the whole,~\eqref{eq:boltzmann} is the weak form of a \textit{Boltzmann-type} collisional kinetic equation. In fact, the collision term on the right-hand side contains the distribution functions of the interacting vehicles computed in the same space position $x$. This is clearly a local approximation, however compensated -- to some extent -- by the headway $s$, which considers indeed the actual clear space between the vehicles. In $\varphi(s')$, the quantity $s'$ is given by~\eqref{eq:binary.controlled}.

The transport term on the left-hand side of~\eqref{eq:boltzmann} is written taking into account that~\eqref{eq:v-s} implies, for each vehicle, the kinematic relation
$$ \frac{dx}{dt}=\frac{s}{a+s}. $$

In principle, hydrodynamic traffic equations can be obtained from~\eqref{eq:boltzmann} by choosing $\varphi(s)=s^n$, $n\in\mathbb{N}$, so that the first term on the left-hand side gives the time variation of the statistical moments of $f$ with respect to the variable $s$. The latter are indeed the observable macroscopic quantities, for instance:
\begin{itemize}
\item the \textit{density} of the vehicles in the point $x$ at time $t$:
\begin{equation}
	\rho(t,\,x):=\int_{\R_+}f(t,\,x,\,s)\,ds;
	\label{eq:rho}
\end{equation}
\item the \textit{mean headway} among the vehicles in the point $x$ at time $t$:
\begin{equation}
	h(t,\,x):=\frac{1}{\rho(t,\,x)}\int_{\R_+}sf(t,\,x,\,s)\,ds.
	\label{eq:h}
\end{equation}
\end{itemize}
However, proceeding directly in this way leads to a closure issue, because the second term on the left-hand side of~\eqref{eq:boltzmann} cannot be expressed, in full generality, in terms of macroscopic quantities derived from statistical moments of the distribution function $f$.

To circumvent such a difficulty of the theory, it is useful to perform a hyperbolic scaling of time and space:
\begin{equation}
	t\to\frac{2}{\epsilon}t, \qquad x\to\frac{2}{\epsilon}x,
	\label{eq:scaling}
\end{equation}
where $0<\epsilon\ll 1$, which leads one to rewrite~\eqref{eq:boltzmann} in the form
\begin{align}
	\begin{aligned}[b]
		\partial_t\int_{\R_+}\varphi(s)f(t,\,x,\,s)\,ds &+ \partial_x\int_{\R_+}\frac{s}{a+s}\varphi(s)f(t,\,x,\,s)\,ds \\
		&= \frac{1}{\epsilon}\E_\Theta\left[\int_{\R_+}\int_{\R_+}\ave{\varphi(s')-\varphi(s)}f(t,\,x,\,s)
			f(t,\,x,\,s_\ast)\,ds\,ds_\ast\right].
	\end{aligned}
	\label{eq:boltzmann.scaled}
\end{align}

Basically, the scaling~\eqref{eq:scaling} produces the coefficient $1/\epsilon$ in front of the Boltzmann-type interaction term, hence $\epsilon$ plays in this context a role similar to that of the Knudsen number in the classical kinetic theory. Since we are assuming that $\epsilon$ is small, a hydrodynamic regime is justified. Taking inspiration from~\cite{duering2007PHYSA}, such a regime can be described by a splitting of~\eqref{eq:boltzmann.scaled} totally analogous to that often adopted in the numerical solution of the Boltzmann equation, see e.g.~\cite{dimarco2018JCP,pareschi2001SISC}. One first solves the fast interactions:
\begin{equation}
	\partial_t\int_{\R_+}\varphi(s)f(t,\,x,\,s)\,ds=\frac{1}{\epsilon}\E_\Theta\left[\int_{\R_+}\int_{\R_+}
		\ave{\varphi(s')-\varphi(s)}f(t,\,x,\,s)f(t,\,x,\,s_\ast)\,ds\,ds_\ast\right],
	\label{eq:collision_step}
\end{equation}
which, owing to the high frequency $1/\epsilon$, reach quickly an equilibrium described by a local asymptotic distribution function (the analogous of a local Maxwellian in the classical kinetic theory). Next, one transports such a local equilibrium distribution according to the left-hand side of~\eqref{eq:boltzmann.scaled} on the slower (i.e. hydrodynamic) scale:
\begin{equation}
	\partial_t\int_{\R_+}\varphi(s)f(t,\,x,\,s)\,ds+\partial_x\int_{\R_+}\frac{s}{a+s}\varphi(s)f(t,\,x,\,s)\,ds=0.
	\label{eq:transport_step}
\end{equation}

\subsection{Local equilibrium distribution}
The splitting procedure outlined above requires to identify the local Maxwellian from the interaction step~\eqref{eq:collision_step}. Unfortunately, this is in general a hard task for ``collisional'' Boltzmann-type equations such as~\eqref{eq:collision_step}, which can however be dealt with in special cases by means of suitable asymptotic procedures. One of the most effective is the so-called \textit{quasi-invariant interaction limit}, introduced in~\cite{cordier2005JSP} and reminiscent of the grazing collision limit applied in the classical kinetic theory~\cite{villani1998ARMA}. In our case, it consists in assuming that the binary rules~\eqref{eq:binary.controlled} produce a very small change of headway each time that two vehicles interact, so as to create a balance between the weakness of the interactions and the high frequency $1/\epsilon$ at which they occur.

To obtain such an effect, we scale the parameters $a$, $\nu$, $\sigma^2$ in~\eqref{eq:binary.controlled} using $\epsilon$ as follows:
\begin{equation}
	a=\frac{1}{\sqrt{\epsilon}}, \qquad \nu=\frac{1}{\epsilon}, \qquad \sigma^2=\epsilon.
	\label{eq:quasi-invariant_scaling}
\end{equation}
In this way, for $\epsilon$ small the uncontrolled part of the interaction becomes small as well, because $a$ is large and the variance of $\eta$ is small, cf.~\eqref{eq:binary}; at the same time, the control $\ucontr^\ast$ is dampened, because its penalisation $\nu$ is large, cf.~\eqref{eq:ustar}. On the whole, $s'\approx s$ and the interaction is quasi-invariant. In such a regime, we can expand
$$ \ave{\varphi(s')-\varphi(s)}=\varphi'(s)\ave{s'-s}+\frac{1}{2}\varphi''(s)\ave{(s'-s)^2}
	+\frac{1}{6}\varphi'''(\bar{s})\ave{(s'-s)^3}, $$
where $\bar{s}\in(\min\{s,\,s'\},\,\max\{s,\,s'\})$. Using~\eqref{eq:binary.controlled} for $s'$ with the scaling~\eqref{eq:quasi-invariant_scaling} gives:
\begin{align*}
	\ave{\varphi(s')-\varphi(s)} &= \varphi'(s)\epsilon\left\{\frac{1}{1+\epsilon\Theta^2}(s_\ast-s)
		+\frac{\Theta^2}{1+\epsilon\Theta^2}\left[\mu s_d(\rho)+(1-\mu)s_\ast-s\right]\right\} \\
	&\phantom{=} +\frac{\epsilon}{2}\varphi''(s)s^2+o(\epsilon),
\end{align*}
where $o(\epsilon)$ denotes, for every $s,\,s_\ast\in\R_+$, a remainder negligible with respect to $\epsilon$. Plugging this expansion into~\eqref{eq:collision_step} and considering that $\Theta^2\sim\operatorname{Bernoulli}(p)$, because so is $\Theta$ by assumption, we can fruitfully approximate~\eqref{eq:collision_step} with the limit equation arising for $\epsilon\to 0^+$:
\begin{align}
	\begin{aligned}[b]
		\partial_t & \int_{\R_+}\varphi(s)f(t,\,x,\,s)\,ds \\
		&= \int_{\R_+}\int_{\R_+}\varphi'(s)\{s_\ast-s+p\left[\mu s_d(\rho)+(1-\mu)s_\ast-s\right]\}f(t,\,x,\,s)f(t,\,x,\,s_\ast)\,ds\,ds_\ast \\
		&\phantom{=} +\frac{\rho}{2}\int_{\R_+}\varphi''(s)s^2f(t,\,x,\,s)\,ds,
	\end{aligned}
	\label{eq:FP.weak}
\end{align}
provided $f$ has bounded $s$-moments of order $3+\delta$ for some $\delta>0$ and $\ave{\abs{\eta}^3}<+\infty$, so as to guarantee the integrability of the terms contained in the aforementioned remainder $o(\epsilon)$.

Taking $\varphi(s)=1$ and recalling the definition~\eqref{eq:rho}, we see from~\eqref{eq:FP.weak} that the vehicle density $\rho$ is conserved in time by the interactions. Conversely, taking $\varphi(s)=s$ and recalling the definition~\eqref{eq:h}, we see from~\eqref{eq:FP.weak} that the mean headway is not if $p,\,\mu>0$, i.e. if the control~\eqref{eq:ustar} is truly active on at least some vehicles in the traffic stream and is not completely biased towards the speed alignment. Specifically, the mean headway $h$ satisfies the equation:
$$ \partial_t h=p\mu\rho\left(s_d(\rho)-h\right). $$
Omitting for simplicity the parametric dependence on $x$, we have explicitly
\begin{equation}
	h(t)=\left(h_0-s_d(\rho)\right)e^{-p\mu\rho t}+s_d(\rho),
	\label{eq:h.time}
\end{equation}
where $h_0:=h(0)$.

Integrating by parts the right-hand side of~\eqref{eq:FP.weak}, together with the following boundary conditions:
\begin{align}
	\begin{aligned}[c]
		& f(\cdot,\,\cdot,\,0)=\partial_sf(\cdot,\,\cdot,\,0)=0 \\
		& f(\cdot,\,\cdot,\,s),\,\partial_sf(\cdot,\,\cdot,\,s)\to 0 \text{ for } s\to +\infty \text{ (both infinitesimal of order $>2$)},
	\end{aligned}
	\label{eq:FP.bc}
\end{align}
we recognise that~\eqref{eq:FP.weak} is the weak form of the following \textit{Fokker-Planck equation} with time-varying coefficients:
\begin{equation}
	\partial_tf=\rho\partial_s\biggl\{\Bigl[(1+p)(s-h(t))-p\mu(s_d(\rho)-h(t))\Bigr]f\biggr\}+\frac{\rho}{2}\partial^2_s(s^2f).
	\label{eq:FP}
\end{equation}

\begin{figure}[!t]
\centering
\subfigure[]{\includegraphics[width=0.495\textwidth]{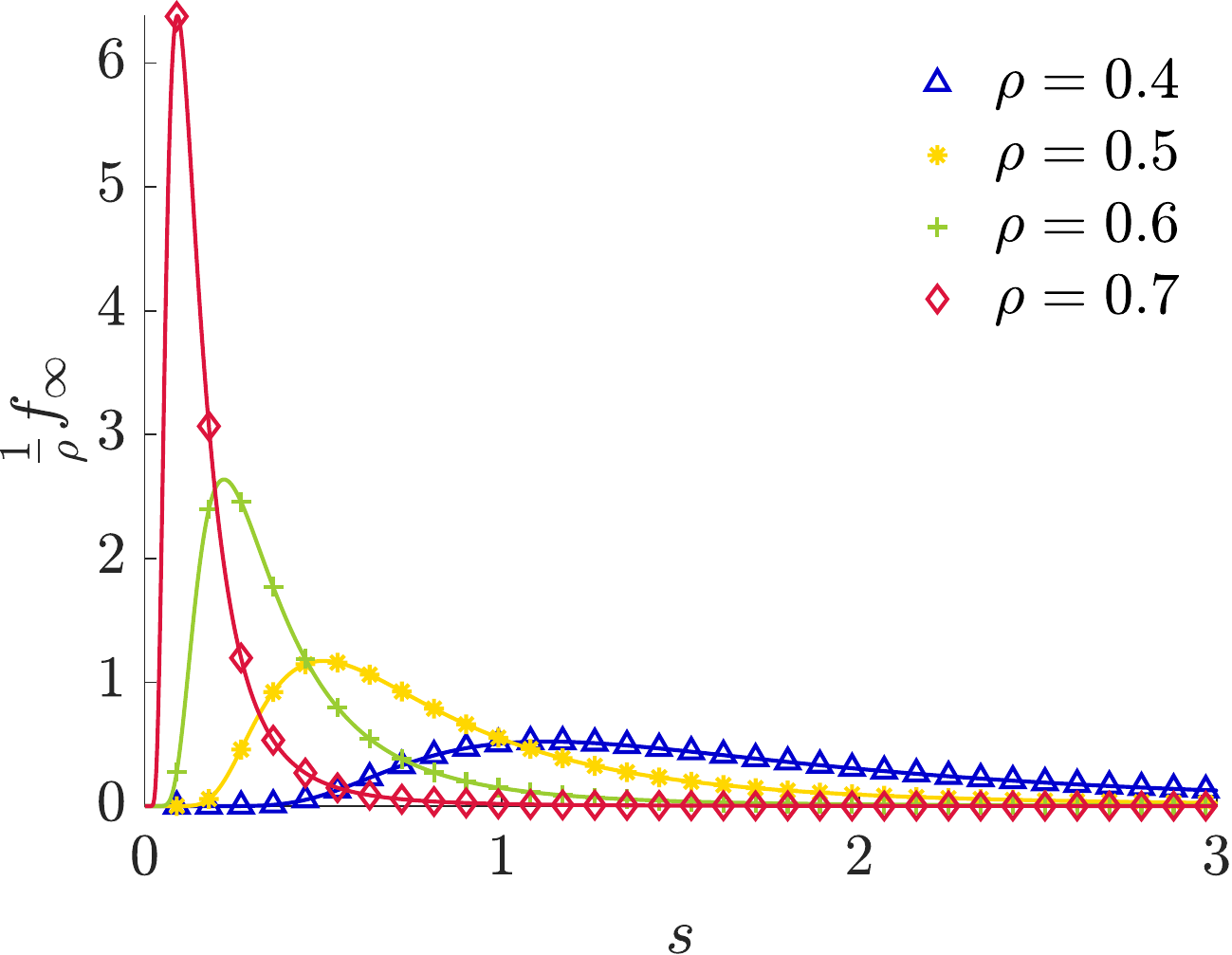}}
\subfigure[]{\includegraphics[width=0.495\textwidth]{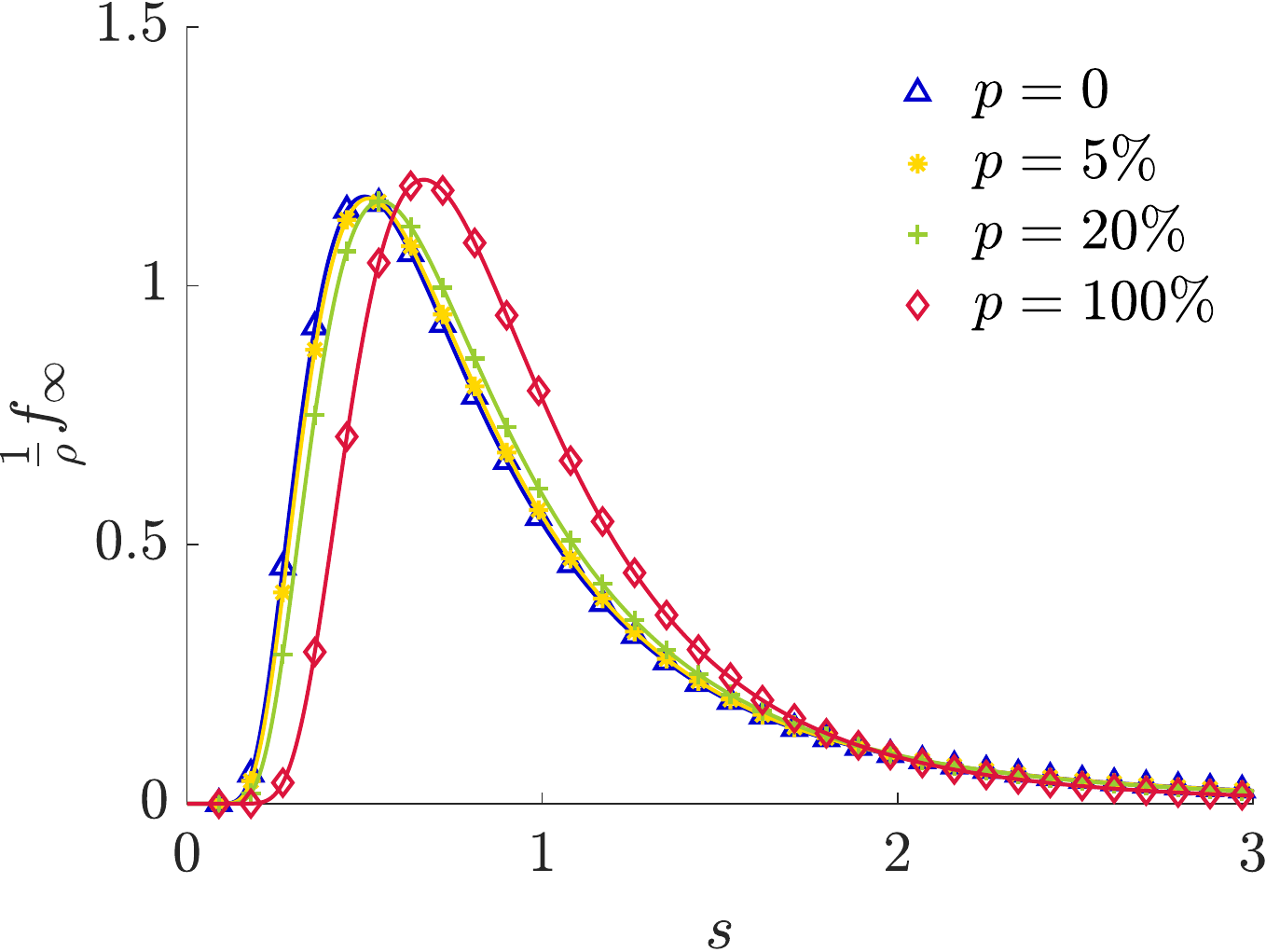}}
\subfigure[]{\includegraphics[width=0.495\textwidth]{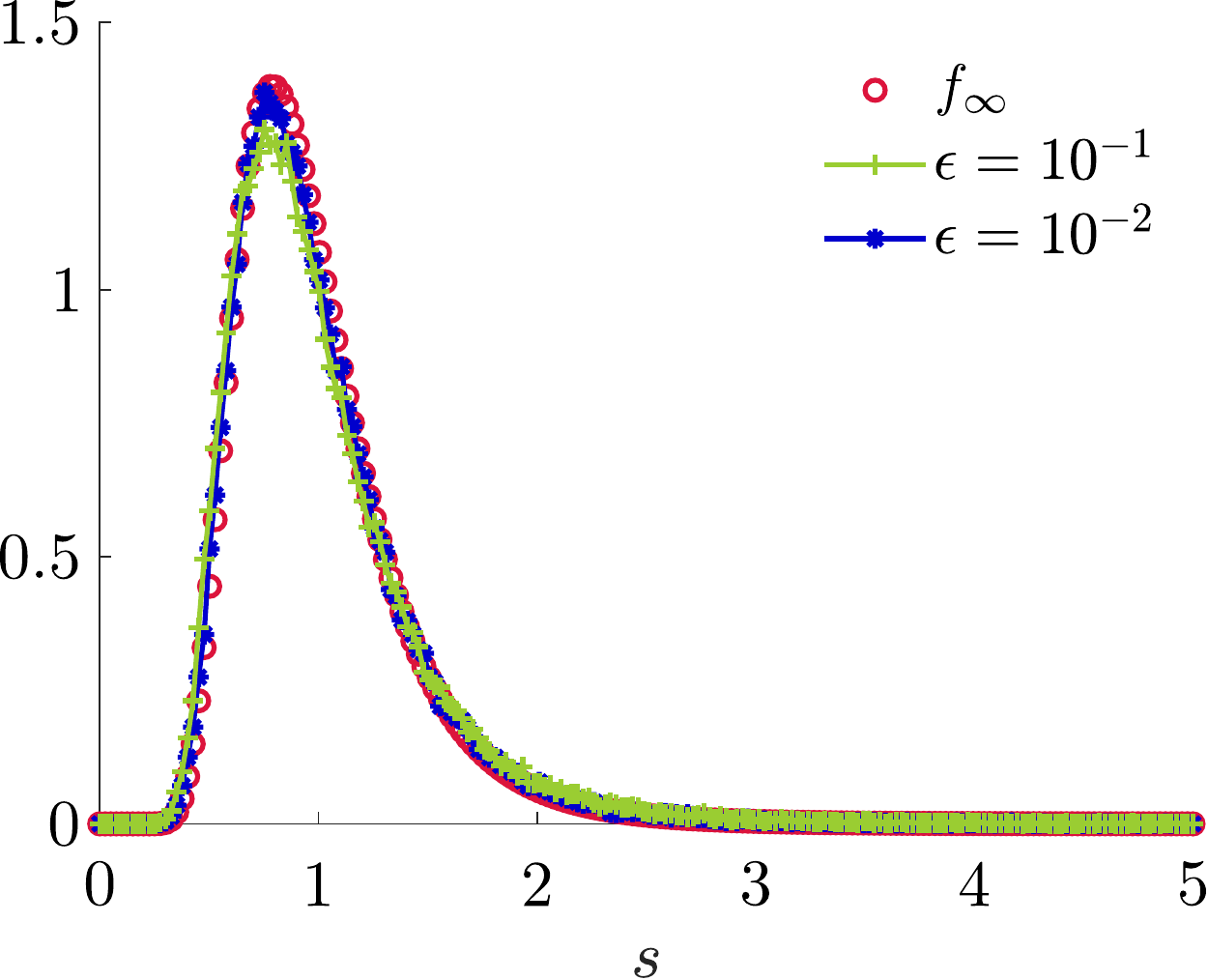}}
\caption{(a) The normalised distribution of the local asymptotic headway, cf.~\eqref{eq:finf}, plotted for $p=0$ (uncontrolled case), $s_d(\rho)=(1/\rho-1)^2$ and various local traffic densities. (b) The same normalised distribution plotted for $\rho=0.5$ and various penetration rates. (c) Comparison between $f^\infty$ given by~\eqref{eq:finf} and the stationary distribution to~\eqref{eq:collision_step} computed numerically for $\rho=0.5$ and for small values of $\epsilon$ in the quasi-invariant scaling~\eqref{eq:quasi-invariant_scaling}.}
\label{fig:finf}
\end{figure}

Under the quasi-invariant scaling~\eqref{eq:quasi-invariant_scaling}, for each fixed $t>0$ the solution to~\eqref{eq:FP} approximates the large time aggregate trend of the interactions in the point $x$. In particular, the local equilibrium distribution of the system is well approximated by the local asymptotic solution to~\eqref{eq:FP}, say $f^\infty(x,\,s)$, obtained formally for $t\to +\infty$. The advantage is that the latter can be computed explicitly, unlike the stationary solution to the original Boltzmann-type equation~\eqref{eq:collision_step}. In particular, observing from~\eqref{eq:h.time} that $h\to s_d(\rho)$ for $t\to +\infty$, $f^\infty$ solves the equation
$$ (1+p)\Bigl[(s-s_d(\rho))f^\infty\Bigr]+\frac{1}{2}\partial_s(s^2f^\infty)=0, $$
whence, still omitting the parametric dependence on $x$ and imposing $\int_{\R_+}f^\infty(s)\,ds=\rho$, we determine
\begin{equation}
	f^\infty(s)=\rho\frac{{(2(1+p)s_d(\rho))}^{3+2p}}{\Gamma(3+2p)}
		\cdot\frac{e^{-\frac{2(1+p)s_d(\rho)}{s}}}{s^{2(2+p)}},
	\label{eq:finf}
\end{equation}
where $\Gamma$ denotes the Gamma function. Notice that~\eqref{eq:finf} is an inverse Gamma distribution and that the only macroscopic quantity which parametrises it is the one conserved by the interactions, namely the density $\rho$, see Figure~\ref{fig:finf}(a, b). Figure~\ref{fig:finf}(c) confirms that, for $\epsilon$ sufficiently small, the stationary solution to the Boltzmann-type equation~\eqref{eq:collision_step} in the quasi-invariant scaling~\eqref{eq:quasi-invariant_scaling}, obtained numerically by means of a Monte Carlo scheme, is indeed very much well approximated by~\eqref{eq:finf}.

\begin{remark}
For $s\to +\infty$, the distribution function~\eqref{eq:finf} exhibits a \textit{Pareto-type} fat tail, similarly to the distribution curves of wealth and other social determinants studied in~\cite{cordier2005JSP,gualandi2018ECONOMICS}. Interestingly, such a fat-tailed trend has been found in some experimental data on the headway distribution in highway traffic, cf.~\cite{abuelenin2015IEEE}, and related to the presence of high occupancy vehicles in the traffic stream. Moreover, the standard deviation of the normalised distribution $\frac{1}{\rho}f^\infty(s)$, which, from the known formulas for the statistical moments of inverse Gamma random variables, is $s_d(\rho)/\sqrt{1+2p}$, depends linearly on the mean headway $s_d(\rho)$, a fact which is in turn reported as an experimental observation in~\cite{abuelenin2015IEEE}.
\end{remark}

\subsection{Analysis of the local traffic features}
\label{sect:local_traffic}
We now use the local equilibrium distribution~\eqref{eq:finf} to investigate in more detail the effect of the control mechanism on the local traffic.

\subsubsection{Speed distribution}
\begin{figure}[!t]
\centering
\subfigure[]{\includegraphics[width=0.495\textwidth]{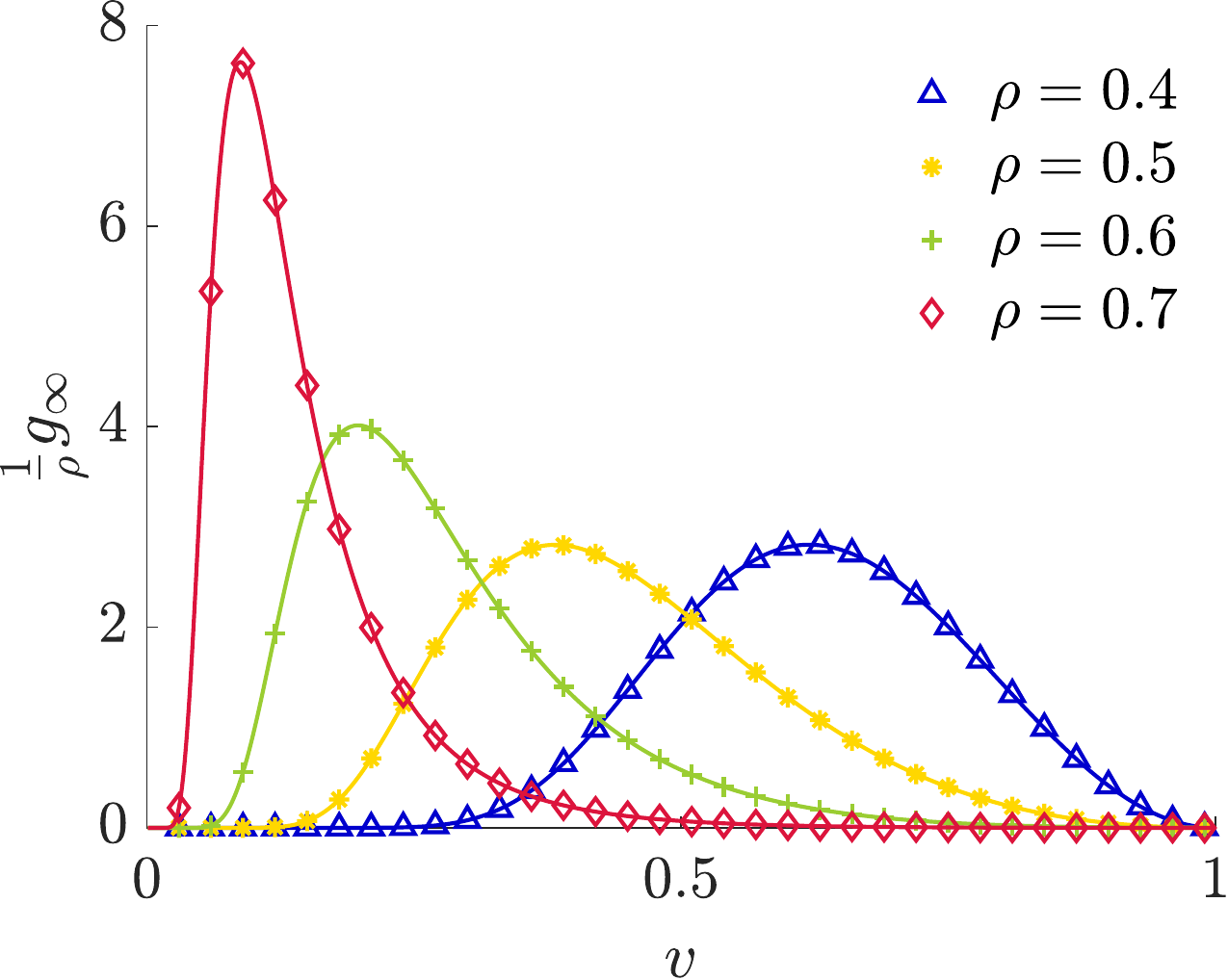}}
\subfigure[]{\includegraphics[width=0.495\textwidth]{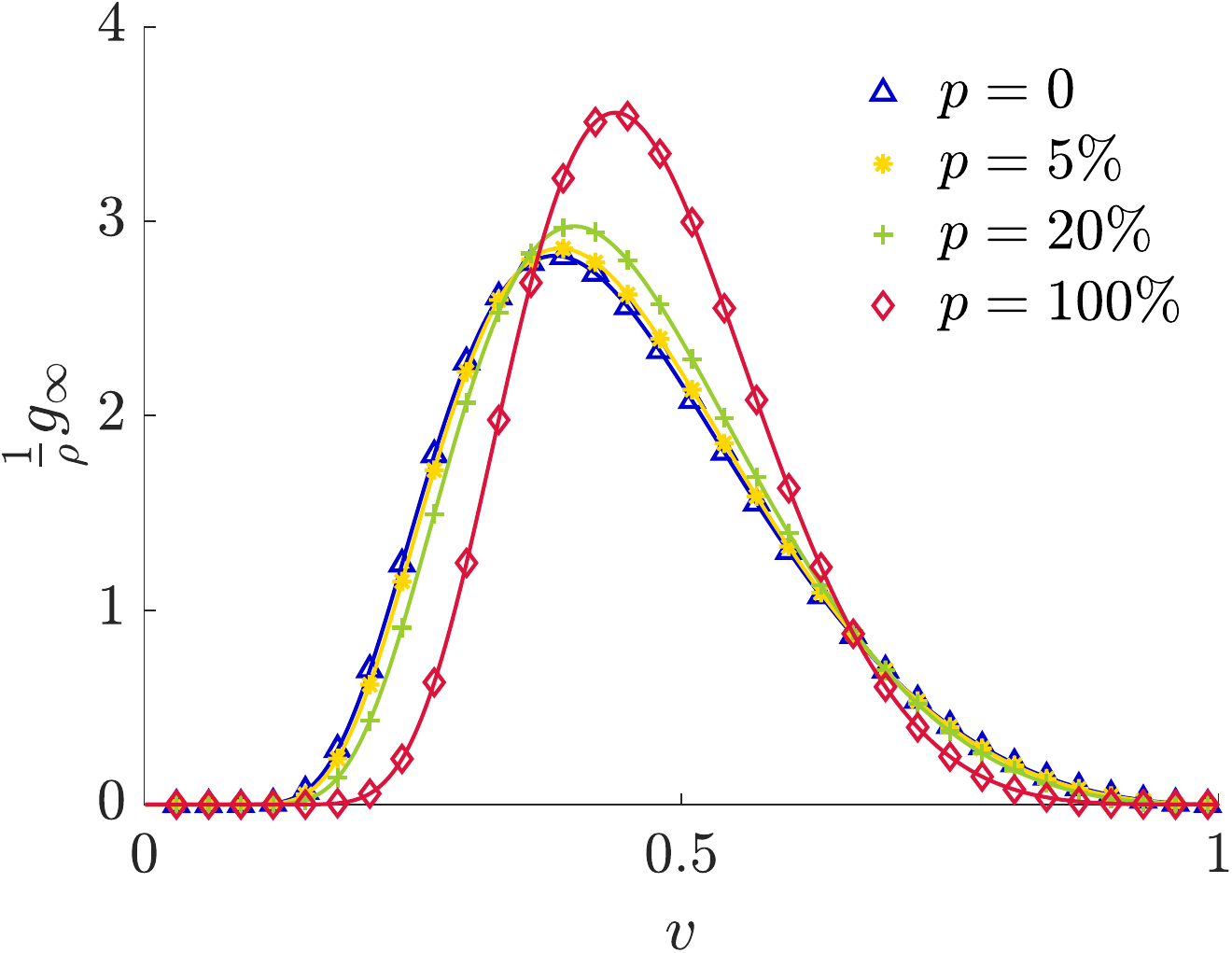}}
\caption{The normalised distribution of the local asymptotic speed, cf.~\eqref{eq:ginf}, plotted with the same parameters as in Figure~\ref{fig:finf} for: (a) various values of the local traffic density $\rho$ and $p=0$; (b) $\rho=0.5$ and various values of the penetration rate $p$.}
\label{fig:ginf}
\end{figure}

First of all, we notice that from~\eqref{eq:finf}, together with the relationship~\eqref{eq:v-s} between the headway $s$ and the speed $v$, we can obtain explicitly the local equilibrium distribution of the speed of the vehicles, say $g^\infty=g^\infty(v)$:
\begin{align}
	\begin{aligned}[b]
		g^\infty(v) &:= f^\infty\left(\frac{av}{1-v}\right)\cdot\frac{a}{{(1-v)}^2} \\
		&= \rho\frac{{\left(\frac{2(1+p)s_d(\rho)}{a}\right)}^{3+2p}}{\Gamma(3+2p)}
			\cdot\frac{1}{v^2}{\left(\frac{1-v}{v}\right)}^{2(1+p)}e^{-\frac{2(1+p)s_d(\rho)}{a}\cdot\frac{1-v}{v}},
	\end{aligned}
	\label{eq:ginf}
\end{align}
which is supported in the bounded interval $[0,\,1]$, see Figure~\ref{fig:ginf}. Notice that, consistently with the scaling~\eqref{eq:quasi-invariant_scaling}, the parameter $a$ brought into the expression of $g^\infty$ by the relationship~\eqref{eq:v-s} has to be thought of as sufficiently large.

Next, we observe that~\eqref{eq:finf},~\eqref{eq:ginf} are independent of the parameter $\mu$ of the binary interactions~\eqref{eq:binary.controlled}. This means that the simultaneous control of the safety distance and of the speed alignment, cf.~\eqref{eq:J}, is asymptotically equivalent to the control of the sole safety distance, or, in other words, that, in the long run, any $\mu>0$ is equivalent to $\mu=1$. Therefore, it is natural to wonder whether, for $\mu>0$, the sole control of the safety distance impacts automatically also on the speed variations in the traffic stream, which, by the way, have been recognised as responsible for increased levels of crash risk~\cite{peden2004WHO,WHO2015report}. We address this issue by means of the following result:

\begin{lemma} \label{lemma:VarV.VarS}
Let $S\in\R_+$ be the random variable expressing the equilibrium headway of any vehicle from its leading vehicle, with probability density function $\frac{1}{\rho}f^\infty$, cf.~\eqref{eq:finf}. Let moreover
$$ V:=\frac{S}{a+S}\in [0,\,1], $$
cf.~\eqref{eq:v-s}, be the random variable giving the speed of that vehicle. Then:
$$ \Var(V)\leq\frac{1}{a+s_d(\rho)}\left(\frac{2s_d(\rho)}{a}+\frac{a^2}{{(a+s_d(\rho))}^3}\sqrt{\Var(S)}\right)\sqrt{\Var(S)}, $$
where $\Var(S)$, $\Var(V)$ denote the variance of $S$, $V$, respectively.
\end{lemma}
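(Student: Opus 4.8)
The plan is to bound the variance by a second moment about a well-chosen constant and then to use the explicit (inverse-Gamma) form of $\frac{1}{\rho}f^\infty$. Write $g(s):=\frac{s}{a+s}$ and $s_d:=s_d(\rho)$, and recall that $S$ has mean $\E[S]=s_d$ (cf.~\eqref{eq:h.time} for $t\to+\infty$, or directly from~\eqref{eq:finf}) and finite variance, while $V=g(S)\in[0,1]$ also has finite variance. Since $\Var(V)=\min_{c\in\R}\E[(V-c)^2]$, choosing $c=\bar v:=g(\E[S])=\frac{s_d}{a+s_d}$ and using the exact identity
\[
	V-\bar v=\frac{S}{a+S}-\frac{s_d}{a+s_d}=\frac{a\,(S-s_d)}{(a+S)(a+s_d)}
\]
gives
\[
	\Var(V)\le\E\bigl[(V-\bar v)^2\bigr]=\frac{a^2}{(a+s_d)^2}\,\E\!\left[\frac{(S-s_d)^2}{(a+S)^2}\right].
\]
Working from this identity rather than from a Taylor expansion of $g$ about $s_d$ is what makes the estimate robust: the surviving factor $(a+S)^{-2}$ controls the tail of $S$, which matters because $S$ has a Pareto-type tail (see the Remark after~\eqref{eq:finf}), so $\E[\abs{S-s_d}^3]$ need not be finite and naive remainder terms are uncontrollable.

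To isolate the leading term I would use the elementary decomposition
\[
	\frac{1}{(a+S)^2}=\frac{1}{(a+s_d)^2}+\frac{(s_d-S)(2a+s_d+S)}{(a+S)^2(a+s_d)^2},
\]
which, after multiplying by $(S-s_d)^2$, integrating and using $\E[(S-s_d)^2]=\Var(S)$, yields
\[
	\E\!\left[\frac{(S-s_d)^2}{(a+S)^2}\right]=\frac{\Var(S)}{(a+s_d)^2}+\E[Y],
	\qquad Y:=\frac{(S-s_d)^2(s_d-S)(2a+s_d+S)}{(a+S)^2(a+s_d)^2}.
\]
The key point is that $Y\ge0$ exactly on the event $\{S\le s_d\}$, so $\E[Y]\le\E[Y\,\mathbf{1}_{\{S\le s_d\}}]$; this is legitimate even if the negative part of $Y$ has infinite integral, because the left-hand side is finite (it equals the difference of the two finite quantities above) and the right-hand side involves only a bounded integrand. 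On $\{0\le S\le s_d\}$ one has $(s_d-S)^3\le s_d^2(s_d-S)$, $\ 2a+s_d+S\le 2(a+s_d)$ and $(a+S)^2\ge a^2$, hence $Y\,\mathbf{1}_{\{S\le s_d\}}\le\frac{2s_d^2}{a^2(a+s_d)}\,(s_d-S)^+$; taking expectations and bounding $\E[(s_d-S)^+]\le\E\abs{S-s_d}\le\sqrt{\Var(S)}$ by Jensen's inequality gives $\E[Y]\le\frac{2s_d^2}{a^2(a+s_d)}\sqrt{\Var(S)}$.

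Collecting the pieces,
\[
	\Var(V)\le\frac{a^2}{(a+s_d)^2}\left(\frac{\Var(S)}{(a+s_d)^2}+\frac{2s_d^2}{a^2(a+s_d)}\sqrt{\Var(S)}\right)
		=\frac{a^2\Var(S)}{(a+s_d)^4}+\frac{2s_d^2}{(a+s_d)^3}\sqrt{\Var(S)},
\]
and the claim follows from the crude bound $\frac{2s_d^2}{(a+s_d)^3}\le\frac{2s_d}{a(a+s_d)}$, i.e.\ $a s_d\le(a+s_d)^2$, valid for all $a>0$, $s_d\ge0$. I expect the only genuinely delicate step to be the sign-based bookkeeping in $\E[Y]\le\E[Y\,\mathbf{1}_{\{S\le s_d\}}]$: one must work with the sign of $Y$ rather than estimate $\abs{Y}$, which would reintroduce the possibly divergent third absolute moment; everything else is direct computation. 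An essentially equivalent route starts from the mean-value estimate $\abs{g(S)-g(s_d)}\le a\abs{S-s_d}\,(a+\min\{S,s_d\})^{-2}$ and splits the expectation at $s_d$ in the same manner, reaching the same conclusion.
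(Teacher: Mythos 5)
Your proof is correct, and it takes a genuinely different route from the paper's. The paper exploits the concavity of $g(s)=\frac{s}{a+s}$: since the graph of $g$ lies below its tangent line $T$ at $s_d(\rho)$, one has $0\leq V\leq T(S)$ pointwise, which bounds $\E(V^2)$ by $\E(T(S)^2)$ and $\E(V)$ by $g(s_d(\rho))$; the resulting Jensen-gap term $g(s_d(\rho))-\E(V)$ is then controlled by the global Lipschitz constant $1/a$ of $g$ together with Cauchy--Schwarz, producing exactly the two summands $\frac{a^2}{(a+s_d(\rho))^4}\Var(S)$ and $\frac{2s_d(\rho)}{a(a+s_d(\rho))}\sqrt{\Var(S)}$. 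You instead never invoke concavity: you use the variance-minimising-constant inequality $\Var(V)\leq\E[(V-g(s_d(\rho)))^2]$, the exact algebraic identity for $g(S)-g(s_d(\rho))$, and a second exact decomposition of $(a+S)^{-2}$, after which only the event $\{S\leq s_d(\rho)\}$ contributes a positive error, where all quantities are bounded. Both arguments correctly avoid the third moment of $S$ (which, as you note, need not exist for the inverse-Gamma equilibrium when $p=0$): the paper because its estimates only ever involve first and second moments, you by making the tail control explicit through the surviving $(a+S)^{-2}$ factor and the sign splitting. Your route is somewhat longer but yields the marginally sharper intermediate coefficient $\frac{2s_d^2(\rho)}{(a+s_d(\rho))^3}\leq\frac{2s_d(\rho)}{a(a+s_d(\rho))}$ before relaxing to the stated bound; the paper's is shorter and gives the clean byproduct $\E(V)\leq g(\E(S))$ for free. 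One small remark: your aside that the argument survives ``even if the negative part of $Y$ has infinite integral'' is unnecessary caution, since $Y$ is manifestly the difference of two integrable random variables (each dominated by a constant multiple of $(S-s_d(\rho))^2$), so $\E[Y]$ is finite outright; but this does not affect the validity of the step.
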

\begin{proof}
Since for $S\geq 0$ the function $S\mapsto\frac{S}{a+S}$ is concave, its graph lies below any of its tangent lines, cf. Figure~\ref{fig:v-s}. Hence, considering the tangent at $S=s_d(\rho)$, we have
\begin{equation}
	V\leq\frac{1}{a+s_d(\rho)}\left(s_d(\rho)+\frac{a}{a+s_d(\rho)}(S-s_d(\rho))\right),
	\label{eq:V.leq.S}
\end{equation}
which, since $\E(S)=s_d(\rho)$, implies
\begin{align*}
	\Var(V) &= \E(V^2)-\E^2(V) \\
	&\leq {\left(\frac{s_d(\rho)}{a+s_d(\rho)}\right)}^2+\frac{a^2}{(a+s_d(\rho))^4}\E((S-s_d(\rho))^2)
		+2\frac{as_d(\rho)}{(a+s_d(\rho))^3}\E(S-s_d(\rho))-\E^2(V) \\
	&=\frac{a^2}{{(a+s_d(\rho))}^4}\Var(S)+{\left(\frac{s_d(\rho)}{a+s_d(\rho)}\right)}^2-\E^2(V)
\intertext{and further, considering that from~\eqref{eq:V.leq.S} it results $\E(V)\leq\frac{s_d(\rho)}{a+s_d(\rho)}$,}
	&\leq \frac{a^2}{(a+s_d(\rho))^4}\Var(S)+\frac{2s_d(\rho)}{a+s_d(\rho)}\left(\frac{s_d(\rho)}{a+s_d(\rho)}-\E(V)\right).
\end{align*}

Let us examine more closely the term in parenthesis on the right-hand side. We have:
\begin{align*}
	\frac{s_d(\rho)}{a+s_d(\rho)}-\E(V) &= \abs{\frac{s_d(\rho)}{a+s_d(\rho)}-\E(V)} \\
	&\leq \frac{1}{\rho}\int_{\R_+}\abs{\frac{s_d(\rho)}{a+s_d(\rho)}-\frac{s}{a+s}}f^\infty(s)\,ds \\
	&\leq \frac{1}{a\rho}\int_{\R_+}\abs{s_d(\rho)-s}f^\infty(s)\,ds \\
	&\leq \frac{1}{a}{\left(\frac{1}{\rho}\int_{\R_+}\abs{s_d(\rho)-s}^2f^\infty(s)\,ds\right)}^{\frac{1}{2}}
		=\frac{1}{a}\sqrt{\Var(S)},
\end{align*}
where in the last passage we have used the Cauchy-Schwartz inequality with respect to the probability measure $\frac{1}{\rho}f^\infty(s)ds$. Using this in the previous calculation, we finally get the thesis.
\end{proof}

\begin{figure}[!t]
\centering
\subfigure[]{\includegraphics[width=0.495\textwidth]{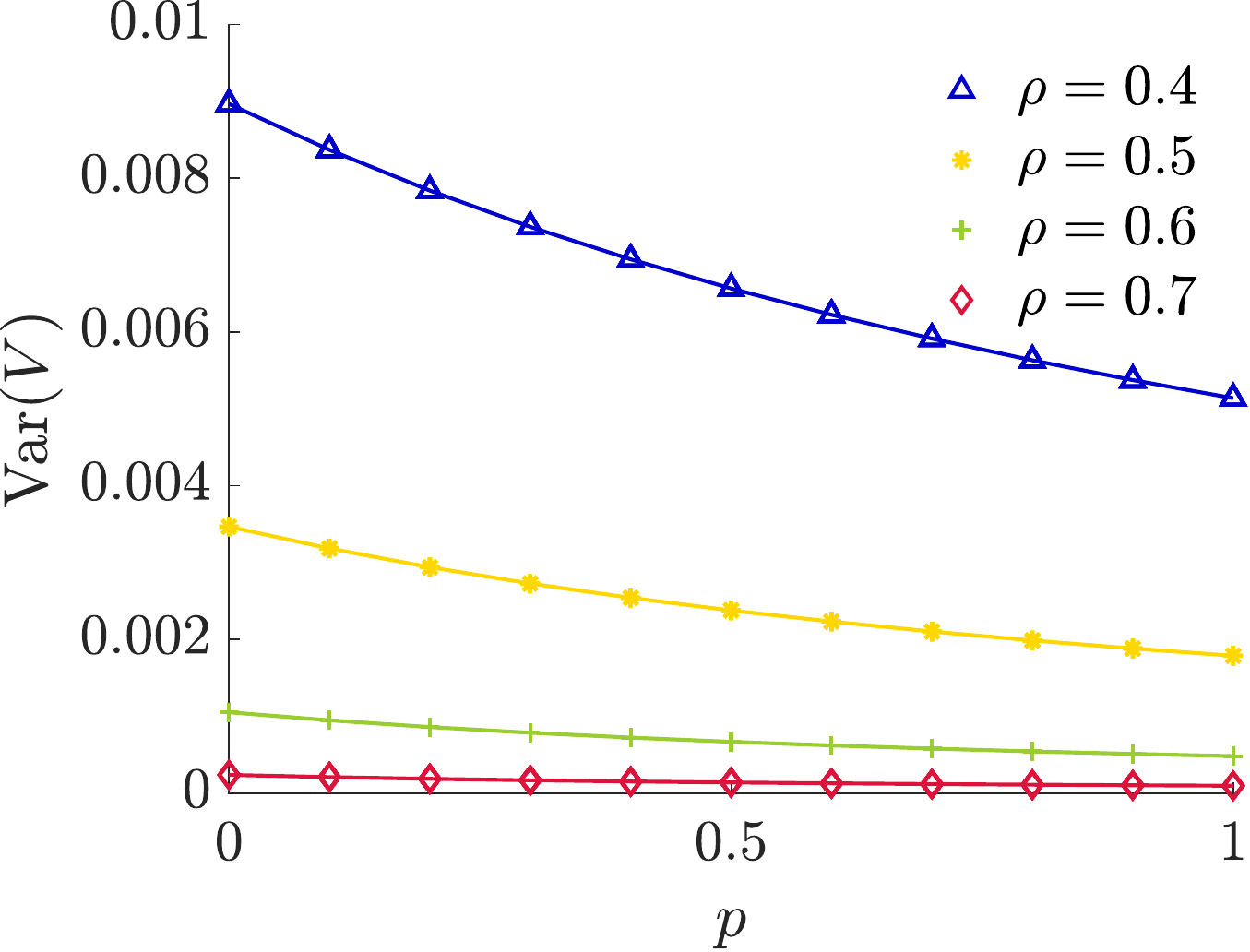}}
\subfigure[]{\includegraphics[width=0.495\textwidth]{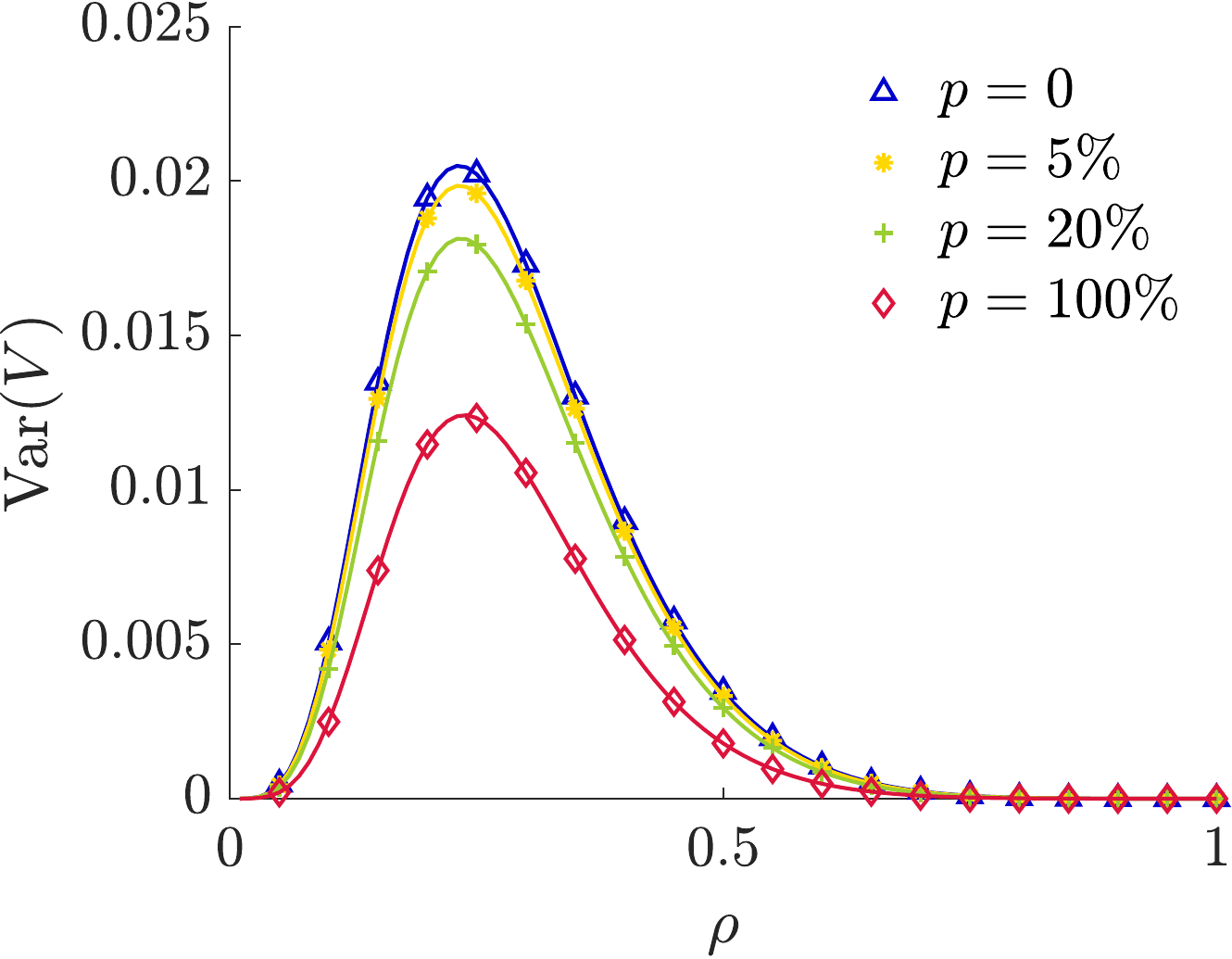}}
\caption{The local asymptotic speed variance as a function of: (a) the penetration rate $p$ for various traffic densities $\rho$; (b) the traffic density $\rho$ for various penetration rates. The curves have been computed numerically from~\eqref{eq:ginf} taking $s_d(\rho)=\left(1/\rho-1\right)^2$ and $a=10$, which corresponds to $\epsilon=10^{-2}$ in~\eqref{eq:quasi-invariant_scaling}.}
\label{fig:varV}
\end{figure}

Owing to Lemma~\ref{lemma:VarV.VarS}, we have\footnote{We use the notation $a\lesssim b$ to mean that there exists a constant $C>0$, whose specific value is unimportant, such that $a\leq Cb$.}
\begin{equation}
	\Var(V)\lesssim\left(1+\sqrt{\Var(S)}\right)\sqrt{\Var(S)}
		=\left(1+\frac{s_d(\rho)}{\sqrt{1+2p}}\right)\frac{s_d(\rho)}{\sqrt{1+2p}},
	\label{eq:varV}
\end{equation}
whence we conclude that increasing $p>0$ leads eventually to a reduction of the speed variance as confirmed by Figure~\ref{fig:varV}(a). Figure~\ref{fig:varV}(b) shows instead that the trend of the speed variance with respect to the traffic density $\rho$ may be, in general, more complex, in particular non-monotone, because it depends on the choice of the function $s_d(\rho)$, cf.~\eqref{eq:varV}. However, increasing the penetration rate $p$ leads systematically to a reduction of the speed dispersion.

\begin{figure}[!t]
\centering
\includegraphics[width=0.495\textwidth]{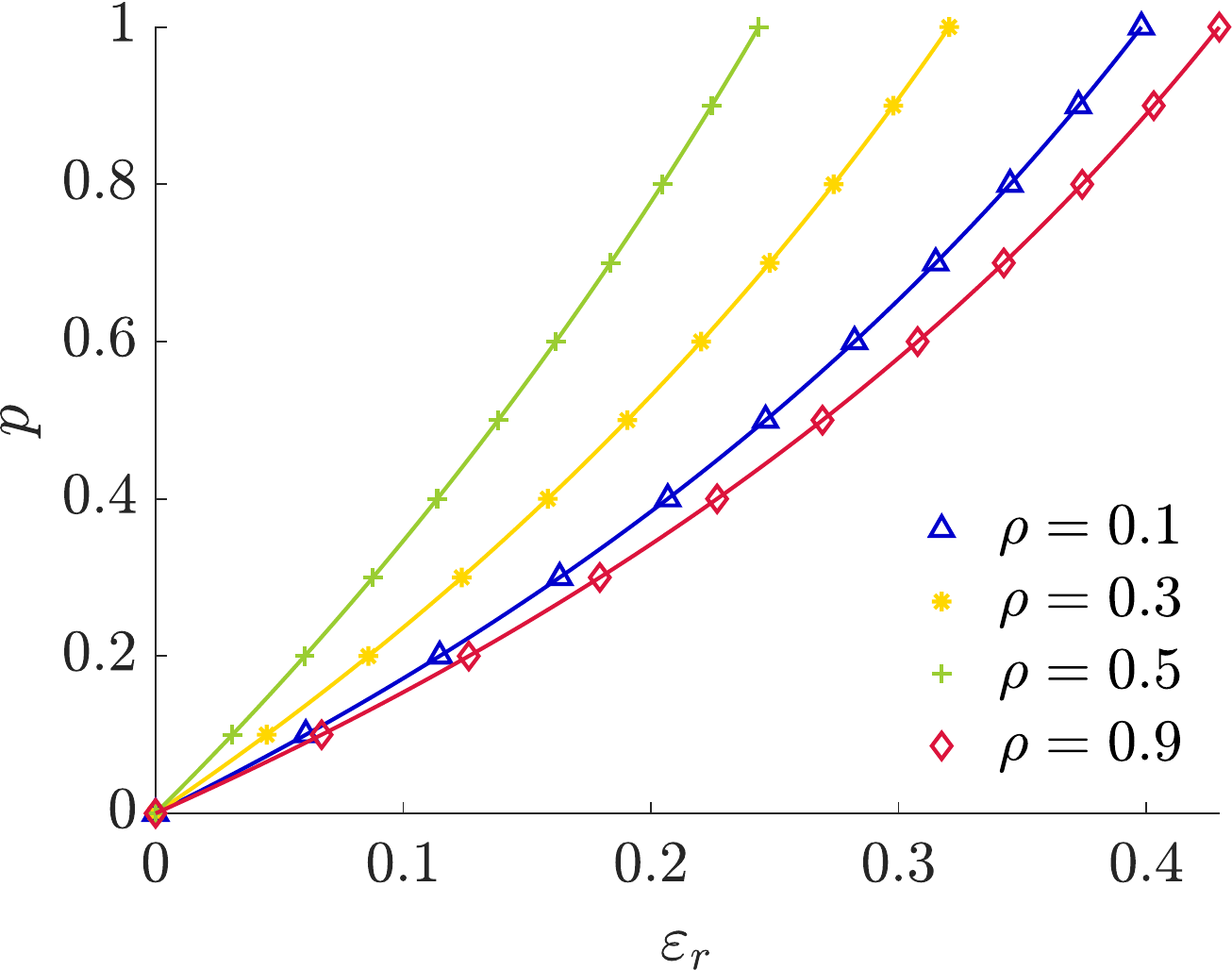}
\caption{The relationship between the relative speed variance reduction $\varepsilon_r$, cf.~\eqref{eq:epsr}, and the penetration rate $p$ necessary to achieve it at various traffic densities $\rho$. The parameters of the model are set like in Figure~\ref{fig:varV}.}
\label{fig:varV_rel}
\end{figure}

A quite natural question is what penetration rate is necessary in order to achieve a certain speed variance reduction with respect to the uncontrolled case. Denoting by $V_p$ the local equilibrium speed at the penetration rate $p$, this amounts to considering the relationship between $p$ and the \textit{relative speed variance reduction}
\begin{equation}
	\varepsilon_r:=1-\frac{\Var(V_p)}{\Var(V_0)},
	\label{eq:epsr}
\end{equation}
which is illustrated in Figure~\ref{fig:varV_rel}. Obviously, increasing $\varepsilon_r$ requires higher and higher penetration rates at all traffic densities. Nevertheless, since, by definition, $p$ cannot exceed $1$, we observe a saturation of $\varepsilon_r$, which cannot attain arbitrarily large values.

\bigskip

Let us now consider the case $\mu=0$, which corresponds to no control on the safety distance, cf.~\eqref{eq:J}. In this case, the interactions~\eqref{eq:binary.controlled} conserve locally the mean headway, as it can be easily checked from~\eqref{eq:FP.weak} with $\varphi(s)=s$. Consequently, the local equilibrium distribution reads
$$ f^\infty(s)=\rho\frac{{(2(1+p)h)}^{3+2p}}{\Gamma(3+2p)}
	\cdot\frac{e^{-\frac{2(1+p)h}{s}}}{s^{2(2+p)}}, $$
where $h\geq 0$ is the constant (in time) and locally arbitrary mean headway. This is also the local equilibrium distribution obtained for any $\mu\geq 0$ with $p=0$ (no control on any vehicle, i.e. $\Theta=0$ deterministically), for in the latter case the interactions~\eqref{eq:binary} conserve in turn the local mean headway. The result of Lemma~\ref{lemma:VarV.VarS} still holds with $s_d(\rho)$ replaced by $h$, whence
$$ \Var(V)\lesssim\left(1+\frac{h}{\sqrt{1+2p}}\right)\frac{h}{\sqrt{1+2p}}. $$
Assuming $\mu=0$, this formula suggests that for $p>0$ the speed dispersion is actually reduced with respect to the uncontrolled case $p=0$. However, it is not possible to compare in general the case $\mu=0$, $p>0$ with the case $\mu>0$, $p>0$ discussed before, because the local mean headways $h$, $s_d(\rho)$ are in principle uncorrelated.

\subsubsection{Time headway distribution}
Combining~\eqref{eq:finf} and~\eqref{eq:tau}, we can compute the local equilibrium distribution of the time headway (cf. Remark~\ref{rem:time_headway}), say $k^\infty=k^\infty(\tau)$:
\begin{align*}
	k^\infty(\tau) &:= f^\infty(\tau-a)H(\tau-a) \\
	&= \rho\frac{{(2(1+p)s_d(\rho))}^{3+2p}}{\Gamma(3+2p)}
		\cdot\frac{e^{-\frac{2(1+p)s_d(\rho)}{\tau-a}}}{(\tau-a)^{2(2+p)}}H(\tau-a),
\end{align*}
where $H$ is the unit step (Heaviside) function. The cutoff at $\tau=a>0$ of the support of $k^\infty$ (cf. also Remark~\ref{rem:time_headway}) is particularly in agreement with the experimental findings reported in~\cite{neubert1999PRE}. Again, for consistency with the scaling~\eqref{eq:quasi-invariant_scaling}, here the parameter $a$ has to be large enough.

\subsubsection{The role of~\texorpdfstring{$\boldsymbol{\mu}$}{}}
The previous analysis has revealed that the parameter $\mu$ does not affect the equilibrium distribution function $f^\infty$, hence it will not impact on the closure of the hydrodynamic equations. On the other hand, from~\eqref{eq:h.time} it is clear that $\mu$ determines the rate of convergence in time of the mean headway $h$ to its local equilibrium value $s_d(\rho)$. More in general, we claim that $\mu$ affects the rate of convergence of the whole distribution function $f$ to $f^\infty$. As such, it plays a role in the assumption underlying the local equilibrium closure, namely that the interactions reach quickly a local equilibrium.

A standard way of studying the convergence to equilibrium of the solutions to Fokker-Planck equations is to resort to \textit{entropy} functionals, which are readily available if the equation can be written in conservative form with a gradient-type flux, see e.g.~\cite{barbaro2016MMS}. Unfortunately, this is not our case, because the diffusion coefficient in~\eqref{eq:FP} is non-constant. For non-constant diffusion coefficients, an alternative approach is provided by \textit{relative entropy} functionals of the form
\begin{equation}
	\cH[f,\,f^\infty](t):=\int_{\R_+}\Phi(F(t,\,s))f^\infty(s)\,ds,
	\label{eq:H}
\end{equation}
cf.~\cite{furioli2017M3AS}, where $F:=f/f^\infty$ and $\Phi:\R_+\to\R$ is any smooth and strictly convex function. If $\Phi(1)=0$ then, owing to Jensen's inequality, it results $\cH\geq 0$, the equality holding only for $f=f^\infty$. If, furthermore, $\cH$ decreases along the solutions to~\eqref{eq:FP} then it is a Lyapunov functional for~\eqref{eq:FP} relative to the equilibrium $f^\infty$. The rate of decrease of $\cH$ estimates the rate of convergence of the solutions of~\eqref{eq:FP} to $f^\infty$.

Following the computations in~\cite{furioli2017M3AS} and using~\eqref{eq:h.time},~\eqref{eq:FP.bc} we determine:
\begin{equation}
	\frac{d}{dt}\cH[f,\,f^\infty](t)=-\frac{\rho}{2}\int_{\R_+}s^2\Phi''(F(t,\,s)){\left(\partial_sF(t,\,s)\right)}^2f^\infty(s)\,ds+\cI[f,\,f^\infty](t).
	\label{eq:H.time_derivative}
\end{equation}
The first term on the right-hand side is analogous to the one found in~\cite{furioli2017M3AS}. Owing to the convexity of $\Phi$, it is non-positive, thus it confers a non-increasing trend on the time derivative of $\cH$ along the solutions to~\eqref{eq:FP}. Conversely, the second term is computed as:
$$ \cI[f,\,f^\infty](t):=-\frac{\rho(1+p)}{2}\left(s_d(\rho)-h_0\right)\left(1-\frac{p\mu}{1+p}\right)e^{-p\mu\rho t}
	\int_{\R_+}\Phi''(F(t,\,s))\partial_sF^2(t,\,s)f^\infty(s)\,ds $$
and is clearly produced by the non-conservation of the mean headway (namely, by the fact that possibly $h_0\ne s_d(\rho)$). It does not have a counterpart in the results reported in~\cite{furioli2017M3AS}, because there only mean-preserving Fokker-Planck equations are considered. We observe that this term is indefinite, as the sign of $\partial_sF^2$ cannot be decided \textit{a priori}. Hence, we cannot guarantee, in general, that~\eqref{eq:H} is a Lyapunov functional for~\eqref{eq:FP} relative to $f^\infty$. Nevertheless, we also observe that, owing to the smoothness of $\Phi$ and to the boundary conditions~\eqref{eq:FP.bc}, we have
$$ \int_{\R_+}\Phi''(F(t,\,s))\partial_sF^2(t,\,s)f^\infty(s)\,ds=-2\int_{\R_+}\Phi'(F(t,\,s))\partial_sf(t,\,s)\,ds, $$
hence, if $F$ is bounded,
$$ \abs{\int_{\R_+}\Phi''(F(t,\,s))\partial_sF^2(t,\,s)f^\infty(s)\,ds}\lesssim\int_{\R_+}\abs{\partial_sf(t,\,s)}\,ds<+\infty, $$
because, owing to~\eqref{eq:FP.bc}, it results $\partial_sf(t,\,\cdot)\in L^1(\R_+)$. If furthermore $\partial_sf\in L^\infty(\R_+;\,L^1(\R_+))$ then\footnote{Notice that $1-\frac{p\mu}{1+p}>0$, because $0\leq p,\,\mu\leq 1$.}
$$ \abs{\cI[f,\,f^\infty](t)}\lesssim\left(1-\frac{p\mu}{1+p}\right)e^{-p\mu\rho t}, $$
therefore, as soon as $\mu>0$, $\cI$ vanishes exponentially fast for $t\to +\infty$ with a $\mu$-dependent rate, in such a way that the larger $\mu$ the smaller $\abs{\cI[f,\,f^\infty]}$ at all times. Hence, in the worst case, i.e. when at certain times $\cI$ is positive, a large $\mu$ increases the chances that $\cI$ be quickly dominated by the first term on the right-hand side of~\eqref{eq:H.time_derivative}, so that, after a possible transient, the interactions start to approach rapidly the local equilibrium $f^\infty$.

\begin{figure}[!t]
\centering
\subfigure[$\mu=0.1$]{\includegraphics[width=0.495\textwidth]{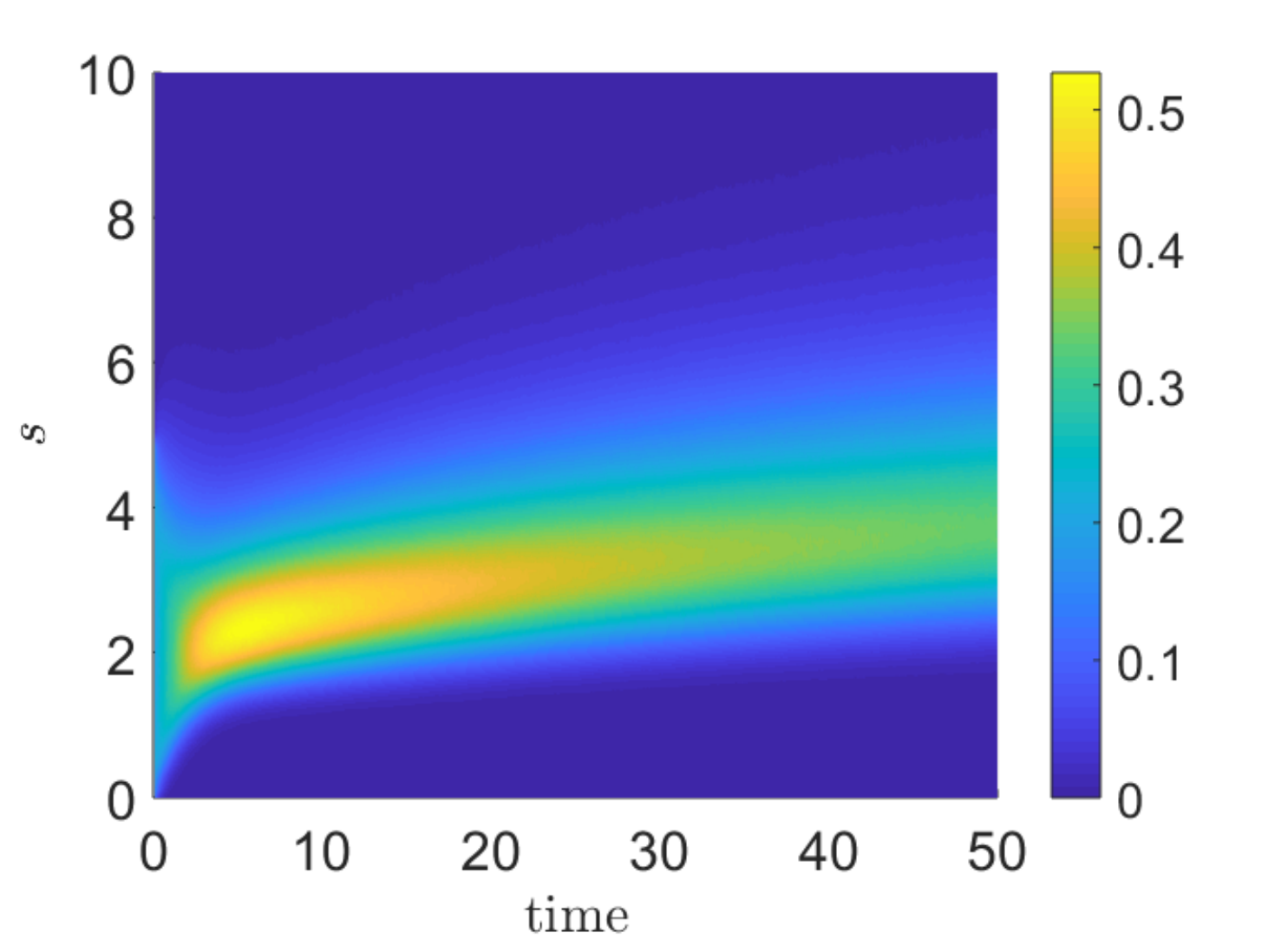}}
\subfigure[$\mu=1$]{\includegraphics[width=0.495\textwidth]{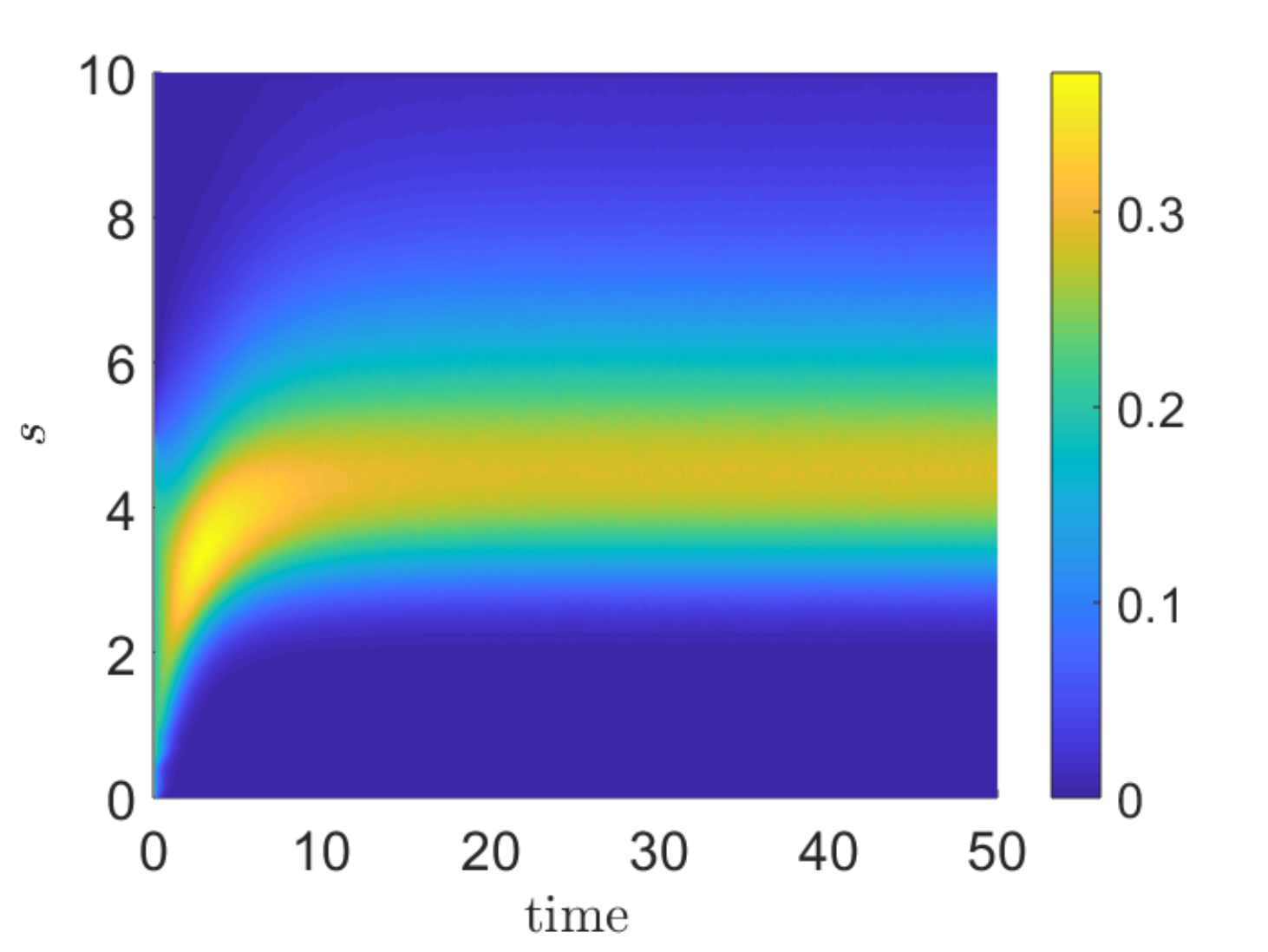}}
\caption{Contours of the local headway distribution $f(\cdot,\,x,\,\cdot)$ computed numerically from~\eqref{eq:collision_step} in the case $\rho=0.3$, $p=50\%$ and with: (a) $\mu=0.1$, (b) $\mu=1$.}
\label{fig:contours.f}
\end{figure}

Figure~\ref{fig:contours.f} provides evidence of the effect of $\mu$ on the rate of convergence of the local headway distribution $f$ to its equilibrium $f^\infty$. For $\mu=0.1$, the asymptotic distribution is not yet fully reached at $t=50$, cf. Figure~\ref{fig:contours.f}(a). Conversely, for $\mu=1$, the asymptotic distribution is reached at $t\approx 10$, cf. Figure~\ref{fig:contours.f}(b).

\subsection{Hydrodynamic limits}
\label{sect:hydro_lim}
We now solve the transport step~\eqref{eq:transport_step} of the splitting procedure, taking advantage of the local equilibrium distribution~\eqref{eq:finf}. We stress that, consistently with the Fokker-Planck asymptotic procedure which produced the Maxwellian~\eqref{eq:finf}, throughout this section the parameter $a$ appearing in~\eqref{eq:transport_step} has to be formally though of large enough, cf. also~\eqref{eq:quasi-invariant_scaling}.

For every time $t>0$ and at every point $x\in\R$, we let
\begin{equation}
	f(t,\,x,\,s)=\rho(t,\,x)\frac{{[2(1+p)s_d(\rho(t,\,x))]}^{3+2p}}{\Gamma(3+2p)}
		\cdot\frac{e^{-\frac{2(1+p)s_d(\rho(t,\,x))}{s}}}{s^{2(2+p)}}
	\label{eq:f.first_order}
\end{equation}
in~\eqref{eq:transport_step}, so that, choosing $\varphi(s)=1$, we get the conservation law
\begin{equation}
	\partial_t\rho+\partial_xq(\rho)=0
	\label{eq:first_order}
\end{equation}
with flux
\begin{equation}
	q(\rho):=\rho\frac{{\left(2(1+p)s_d(\rho)\right)}^{3+2p}}{\Gamma(3+2p)}
		\int_{\R_+}\frac{e^{-\frac{2(1+p)s_d(\rho)}{s}}}{(a+s)s^{3+2p}}\,ds.
	\label{eq:q.s}
\end{equation}
Since, as already observed, the traffic density $\rho$ is the only quantity conserved by the microscopic interactions~\eqref{eq:binary.controlled}, a single macroscopic equation is sufficient to have a self-consistent hydrodynamic model. We refer therefore to~\eqref{eq:first_order} as a \textit{first order} model.

With the substitution $v:=\frac{s}{a+s}$, and recalling also the relationship~\eqref{eq:ginf}, the flux can be given the equivalent alternative form
\begin{equation}
	q(\rho)=\rho\frac{{\left(\frac{2(1+p)s_d(\rho)}{a}\right)}^{3+2p}}{\Gamma(3+2p)}\int_0^1\frac{1}{v}\left(\frac{1-v}{v}\right)^{2(1+p)}
		e^{-\frac{2(1+p)s_d(\rho)}{a}\cdot\frac{1-v}{v}}\,dv,
	\label{eq:q.v}
\end{equation}
which is more convenient for numerical approximation purposes, because, unlike~\eqref{eq:q.s}, it involves an integral on a bounded interval.

\begin{figure}[!t]
\centering
\includegraphics[width=0.495\textwidth]{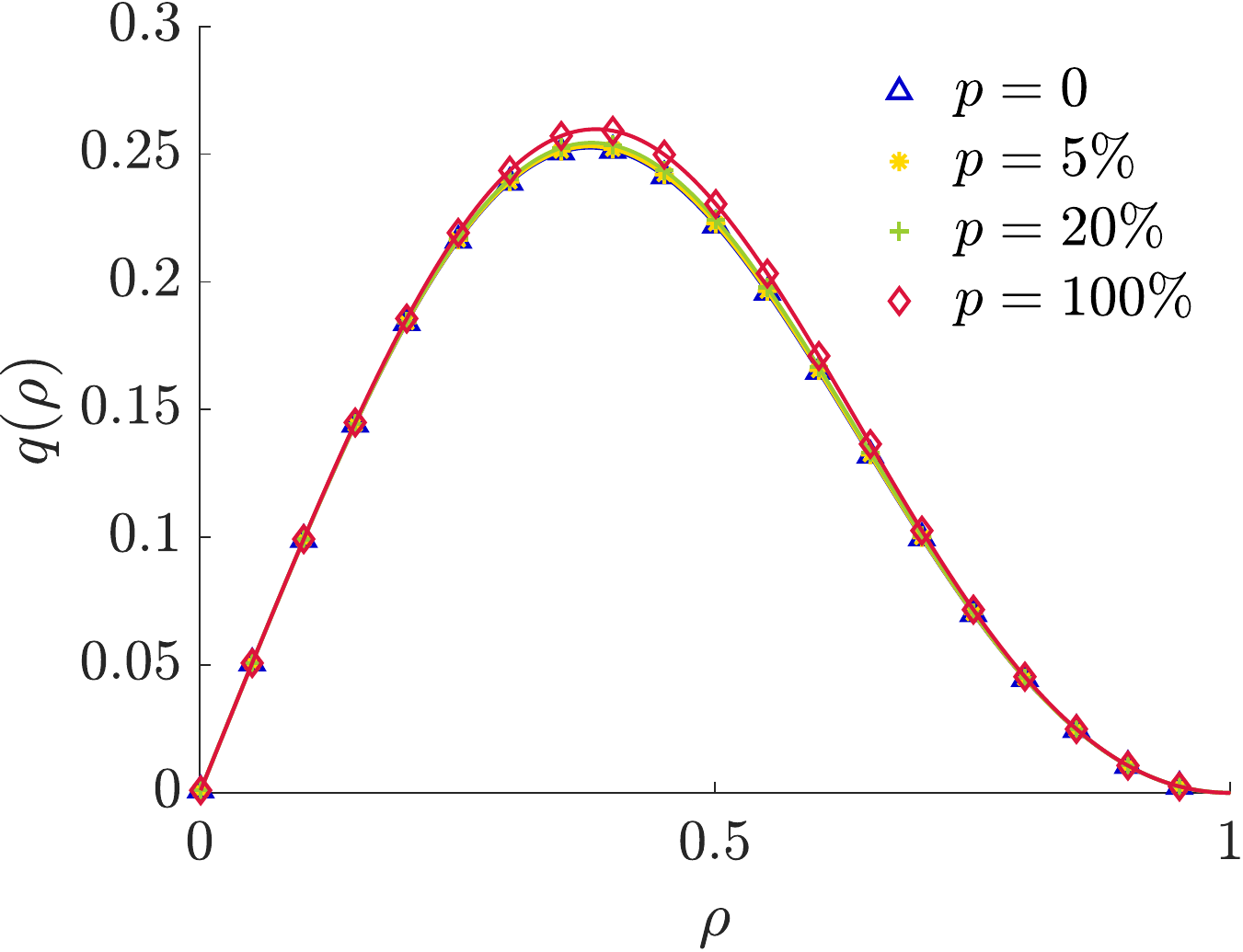}
\caption{The fundamental diagram $\rho\mapsto q(\rho)$, cf.~\eqref{eq:q.v}, for various penetration rates $p$. The parameters of the model are set like in Figure~\ref{fig:varV}.}
\label{fig:funddiag}
\end{figure}

The mapping $\rho\mapsto q(\rho)$ defines the \textit{fundamental diagram} of traffic, which we plot in Figure~\ref{fig:funddiag} for different values of the penetration rate $p$. Interestingly, and consistently with the experimental findings reported in~\cite{stern2018TRC}, the macroscopic flux is virtually unaffected by the driver-assist vehicles at all penetration rates, except for an almost negligible improvement near the density at capacity, i.e. the value of $\rho$ at which $q$ is maximum. Consequently, the solution to~\eqref{eq:first_order} will be virtually the same regardless of $p$. This suggests that the considered driver-assist controls do not impact on the macroscopic flow of vehicles. Nevertheless, this does not mean that they do not produce any aggregate effect on the traffic. Owing to the results of Section~\ref{sect:local_traffic}, a suitable level at which to study such an effect is the kinetic/statistical one.

\begin{remark}[Second order hydrodynamics] \label{rem:second}
One may wonder whether the substantial noninfluence of the driver-assist controls at the macroscopic scale could be corrected by a higher order hydrodynamical model. Actually, the physics of the interactions~\eqref{eq:binary.controlled} induces necessarily a first order local-equilibrium-based hydrodynamics, because the interactions conserve only the mass of the vehicles. In order to obtain a \textit{second order} macroscopic model, i.e. one made of two equations for the first two moments~\eqref{eq:rho},~\eqref{eq:h} of the kinetic distribution function, a classical option is to consider a \textit{monokinetic closure} of~\eqref{eq:boltzmann}, which amounts to forcing
\begin{equation}
	f(t,\,x,\,s)=\rho(t,\,x)\delta(s-h(t,\,x)).
	\label{eq:f.monokin}
\end{equation}
This corresponds to postulating that the variance of the headway distribution is locally zero, namely that all vehicles keep locally the same (unknown) distance $h$ from their leading vehicles. As a matter of fact,~\eqref{eq:f.monokin} is the simplest distribution depending on the two hydrodynamic parameters $\rho$, $h$. Plugging~\eqref{eq:f.monokin} into~\eqref{eq:boltzmann} and recalling~\eqref{eq:binary.controlled} yields then, for $\varphi(s)=1,\,s$,
\begin{equation}
	\begin{cases}
		\partial_t\rho+\partial_x\dfrac{\rho h}{a+h}=0 \\
		\partial_t(\rho h)+\partial_x\dfrac{\rho h^2}{a+h}
			=\dfrac{p\mu\rho^2}{2(\nu+1)}(s_d(\rho)-h),
	\end{cases}
	\label{eq:second.order}
\end{equation}
which is a \textit{pressureless} hydrodynamic model. Nevertheless,~\eqref{eq:f.monokin} is only an \textit{ansatz}, which, unlike~\eqref{eq:f.first_order}, is \textit{not} justified by the physics of the microscopic interactions. Therefore, as we will see in the numerical tests of Section~\ref{sect:second.num}, we \textit{cannot} expect the macroscopic description~\eqref{eq:second.order} to be consistent with the actual aggregate dynamics produced by the kinetic model~\eqref{eq:binary.controlled},~\eqref{eq:boltzmann}.

In conclusion, the physically correct hydrodynamics is the first order one, cf.~\eqref{eq:first_order},~\eqref{eq:q.s}, together with the physiological noninfluence of the considered driver-assist controls on the macroscopic traffic.
\end{remark}

\section{Numerical tests}
\label{sect:numerics}
In this section, we propose several numerical tests to compare the inhomogeneous Boltzmann-type model~\eqref{eq:binary.controlled}-\eqref{eq:boltzmann} with the first and second order hydrodynamic models~\eqref{eq:first_order}-\eqref{eq:q.v},~\eqref{eq:second.order} derived therefrom.

For the numerical solution of the Boltzmann-type kinetic model in the quasi-invariant scaling~\eqref{eq:quasi-invariant_scaling} we adopt a Monte Carlo approach, see~\cite{pareschi2001ESAIMP,pareschi2013BOOK} and also~\cite{dimarco2014AN} for a survey on recent advances. For the sake of completeness, we briefly sketch here the basic ideas of the method. We consider the scaled Boltzmann-type equation~\eqref{eq:boltzmann.scaled} in strong form:
\begin{equation}
	\partial_tf(t,\,x,\,s)+\frac{s}{a+s}\partial_xf(t,\,x,\,s)=\frac{1}{\epsilon}Q(f,\,f)(t,\,x,\,s),
	\label{eq:boltz_strong}
\end{equation}
where $Q$ is the ``collisional'' operator:
\begin{equation}
	Q(f,\,f)(t,\,x,\,s)=\E_{\Theta}\ave*{\int_{\R_+}\left(\frac{1}{\pr{J}}f(t,\,x,\,\pr{s})f(t,\,x,\,\pr{s_\ast})-f(t,\,x,\,s)f(t,\,x,\,s_\ast)\right)\,ds_\ast},
	\label{eq:Q}
\end{equation}
where $\pr{s}$, $\pr{s_\ast}$ are the pre-interaction headways that produce the post-interaction headways $s$, $s_\ast$ according to the interaction rules~\eqref{eq:binary.controlled} and $\pr{J}$ is the Jacobian of the transformation from $(\pr{s},\,\pr{s_\ast})$ to $(s,\,s_\ast)$. As already mentioned in Section~\ref{sect:boltzmann.hydro}, also at the numerical level a classical strategy to solve the equation consists in a splitting of the collision and the transport parts. The core idea is that for $\epsilon\ll 1$ one transports the local equilibrium distribution, which is quickly reached in time in each point $x$:
$$ \partial_tf=\frac{1}{\epsilon}Q(f,\,f) \quad \to \quad \partial_tf+\frac{s}{a+s}\partial_xf=0. $$
In particular, in order to find numerically the local equilibrium (first step) we use the classical Nanbu algorithm for Maxwellian molecules, see e.g.~\cite{bobylev2000PRE}. We refer the interested reader also to~\cite{pareschi2019JNS,tosin2019MMS} for a more detailed discussion on this method with direct applications to multi-agent systems.

On the other hand, for the numerical solution of the conservation law~\eqref{eq:first_order} with the neither convex nor concave flux~\eqref{eq:q.v} we consider a finite volume weighted essentially non-oscillatory (WENO) scheme, see~\cite{shu2009SIREV}, with Godunov numerical flux. This type of problem is hard to tackle at the numerical level with high order schemes, because the convergence to the correct entropy solution may fail in the absence of strict convexity or concavity of the flux. In order to overcome this difficulty, we follow the approach described in~\cite{qiu2008SISC}, where a modification of the WENO method is suggested based on suitable monotone schemes which maintain high order accuracy far from the discontinuities of the solution. The resulting scheme is high order accurate in the regions where the solution is smooth and uses a $O(1)$ indicator near the non-convex/concave discontinuity regions. Specifically, we introduce a uniform mesh with given mesh size $\Delta{x}>0$ in the space domain and, at each time, we choose the time step $\Delta{t}>0$ according to the CFL condition
$$ \frac{\Delta{t}}{\Delta{x}}\max_{x}\abs{q'(\rho(t,\,x))}=1. $$

\subsection{First order hydrodynamics}
In the following, we consider~\eqref{eq:boltz_strong}-\eqref{eq:Q} for $(x,\,s)\in [-4,\,4]\times [0,\,20]$ with periodic boundary conditions on the space variable $x$. As initial condition, we prescribe the following distribution function:
$$ f_0(x,\,s)= 
	\begin{cases}
		0.02 & \text{if } (x,\,s)\in [-2,\,0]\times [0,\,10] \\
		0.03 & \text{if } (x,\,s)\in [0,\,2]\times [0,\,10] \\
		0 & \text{otherwise}
	\end{cases}
$$
to which there corresponds the initial vehicle density
$$ \rho_0(x)=\int_0^{20}f_0(x,\,s)\,ds=
	\begin{cases}
		0.2 & \text{if } x\in [-2,\,0) \\
		0.3 & \text{if } x\in [0,\,2] \\
		0 & \text{otherwise}.
	\end{cases}
$$

\begin{figure}[!t]
\centering
\includegraphics[scale=0.5]{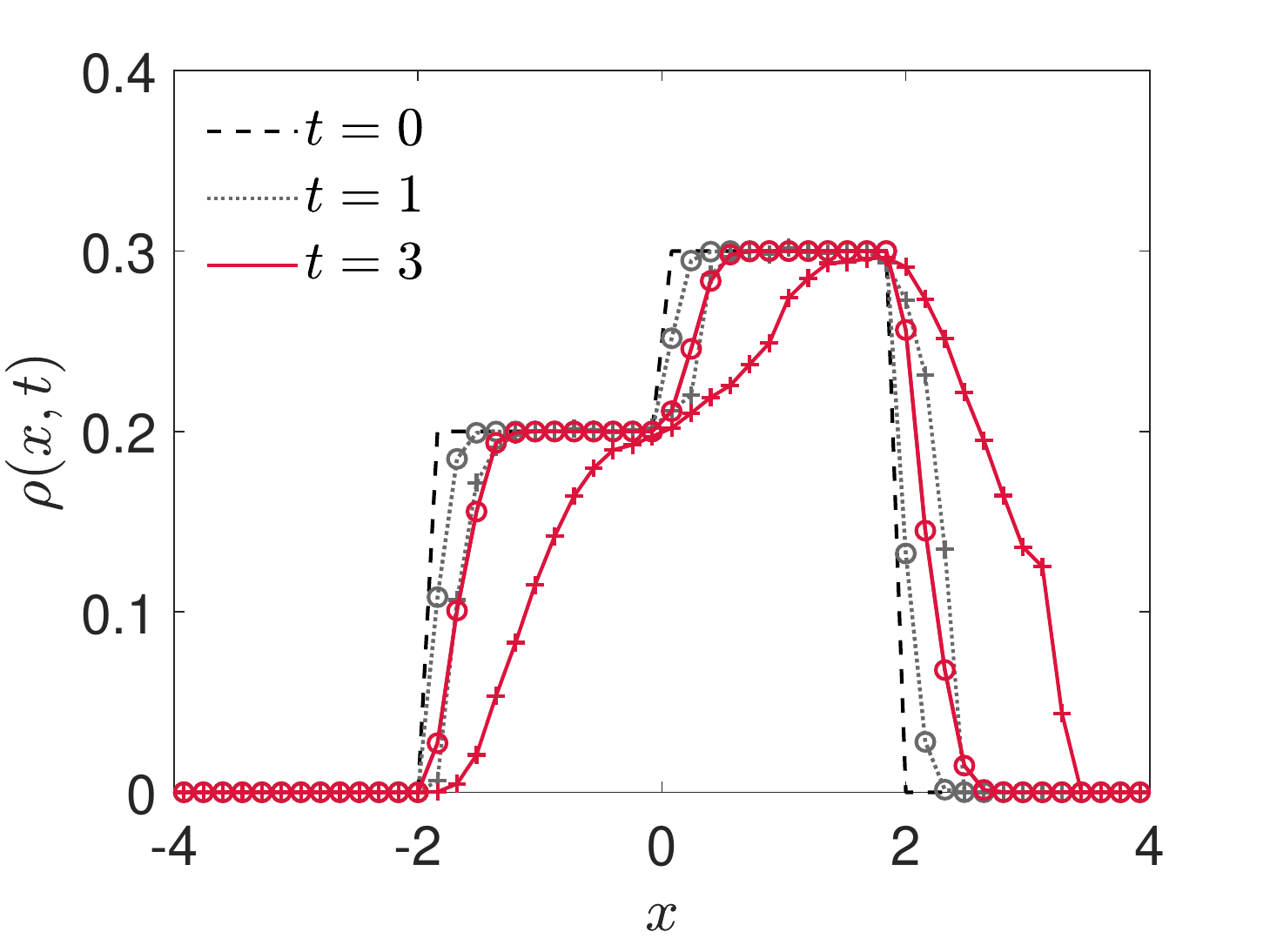}
\includegraphics[scale=0.5]{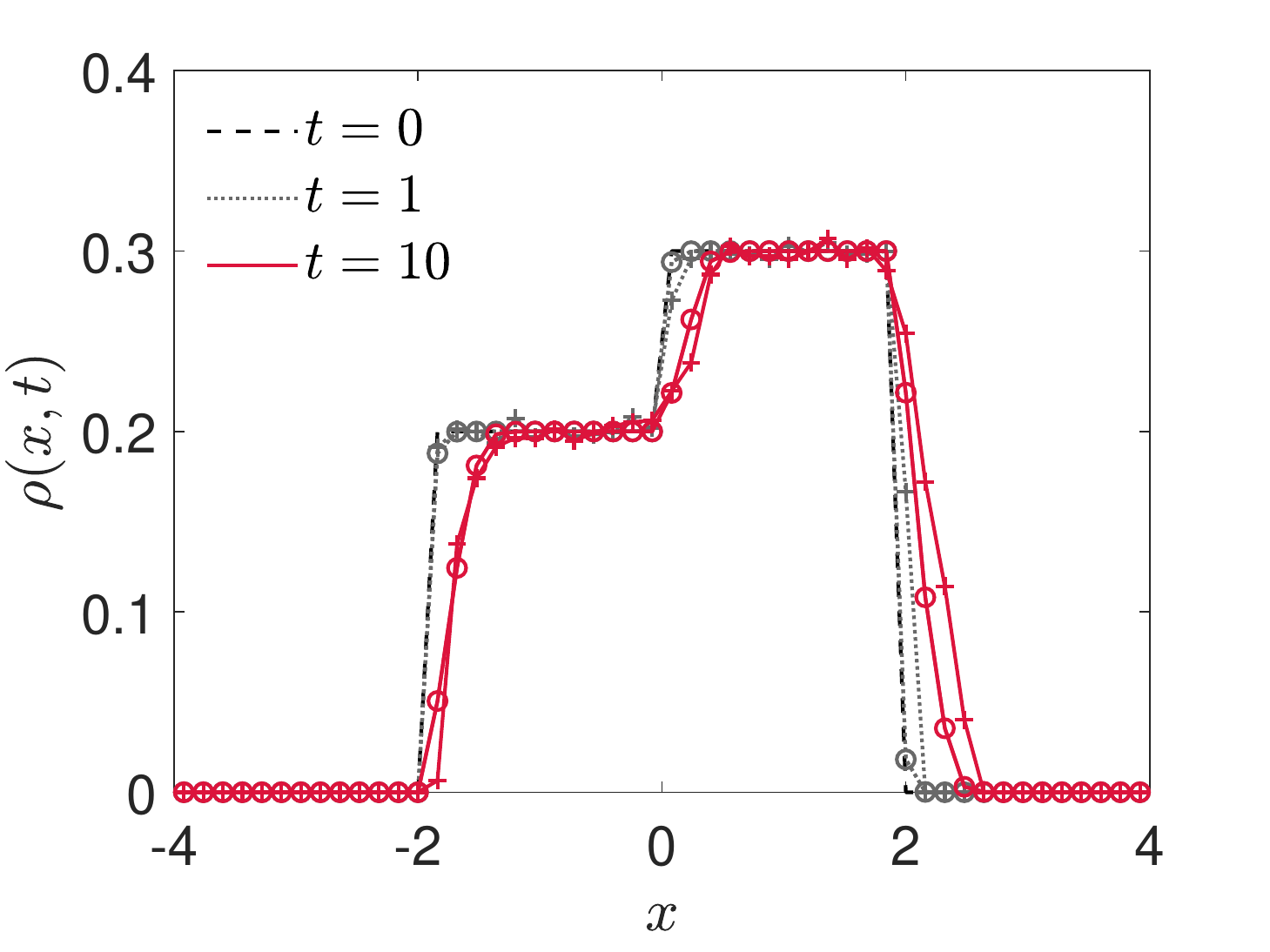} \\
\includegraphics[scale=0.5]{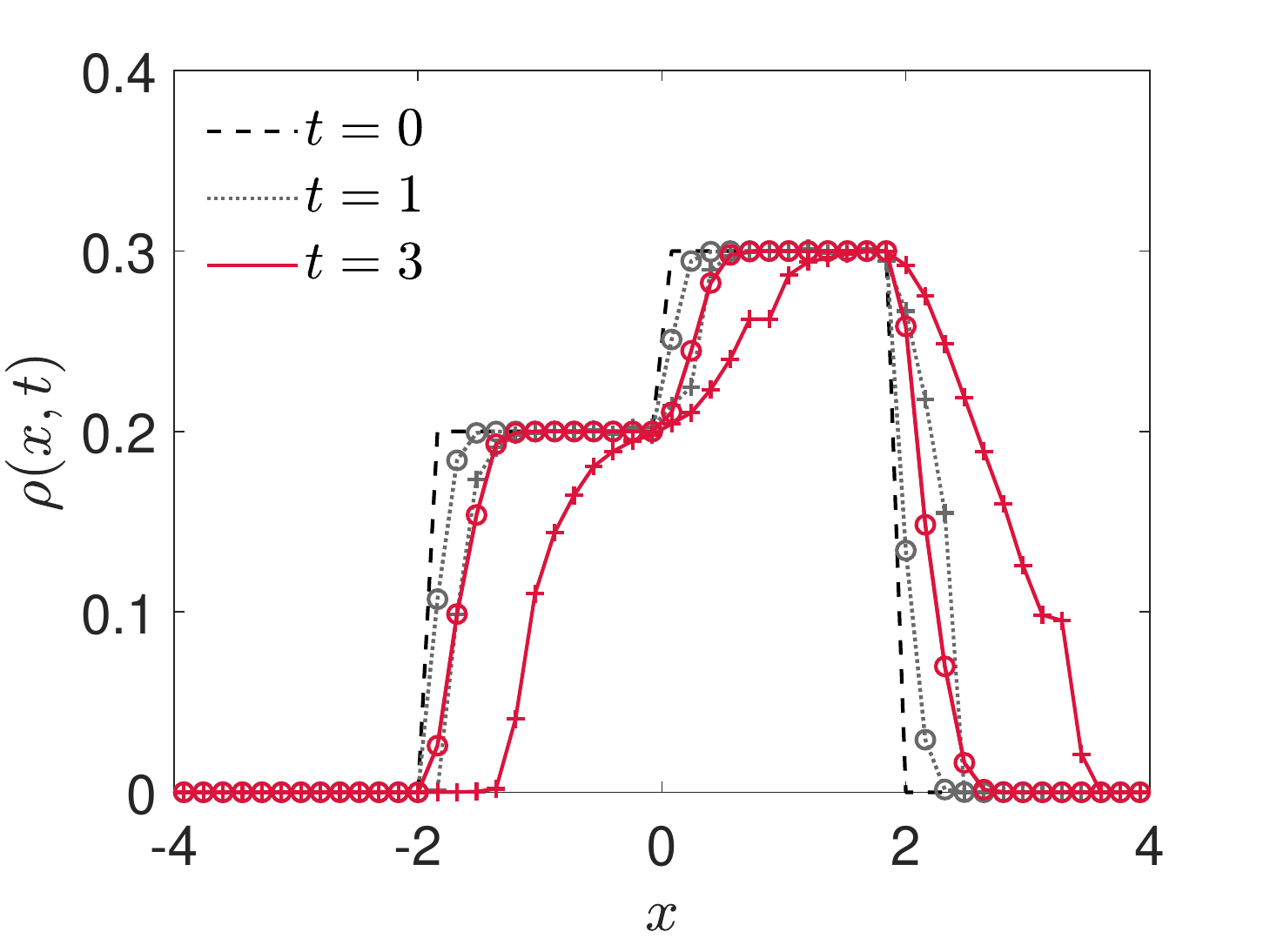}
\includegraphics[scale=0.5]{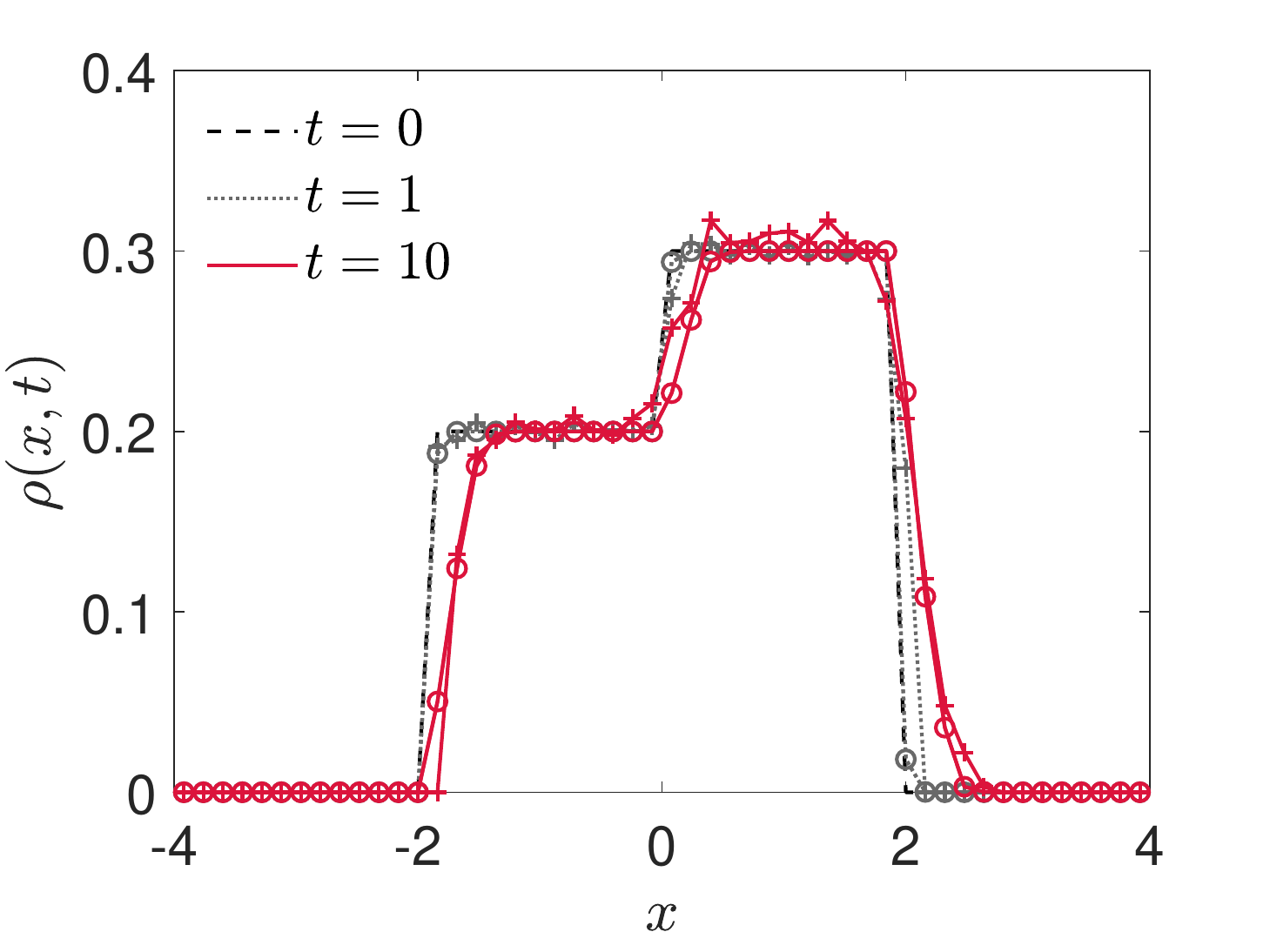}
\caption{The density $\rho$ at successive times reconstructed from the kinetic model (marker $+$) and computed from the first order macroscopic model (marker $\circ$). \textbf{Top row}: $p=5\%$; \textbf{bottom row}: $p=50\%$. \textbf{Left column}: $\epsilon=10^{-2}$; \textbf{right column}: $\epsilon=10^{-4}$.}
\label{fig:first_fig1}
\end{figure}

In Figure~\ref{fig:first_fig1}, we compare the evolution of the density $\rho$ reconstructed from the numerical solution of the kinetic model (marker $+$) with the one obtained from the first order hydrodynamic model (marker $\circ$). We consider, in particular, the cases $\epsilon=10^{-2}$ and $\epsilon=10^{-4}$, which, under the quasi-invariant scaling~\eqref{eq:quasi-invariant_scaling}, imply $a=10$ and $a=100$, respectively. We clearly observe that for vanishing $\epsilon$ the two solutions are consistent with each other, as expected from the theory. Conversely, if $\epsilon$ is not small enough they may greatly differ from one another, because the actual regime of the kinetic model is far from the hydrodynamic regime.

\begin{figure}[!t]
\centering
\includegraphics[scale=0.5]{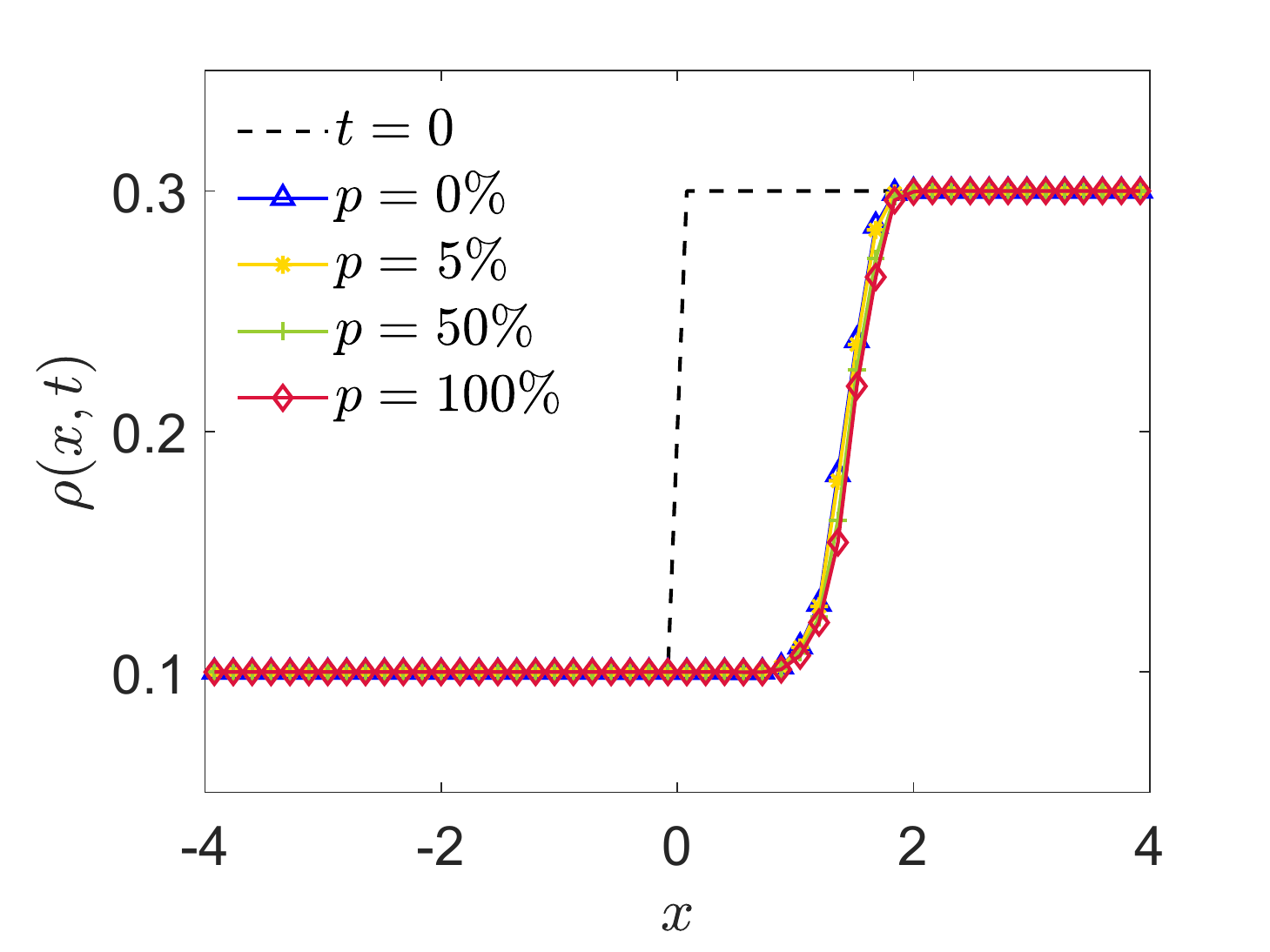}
\includegraphics[scale=0.5]{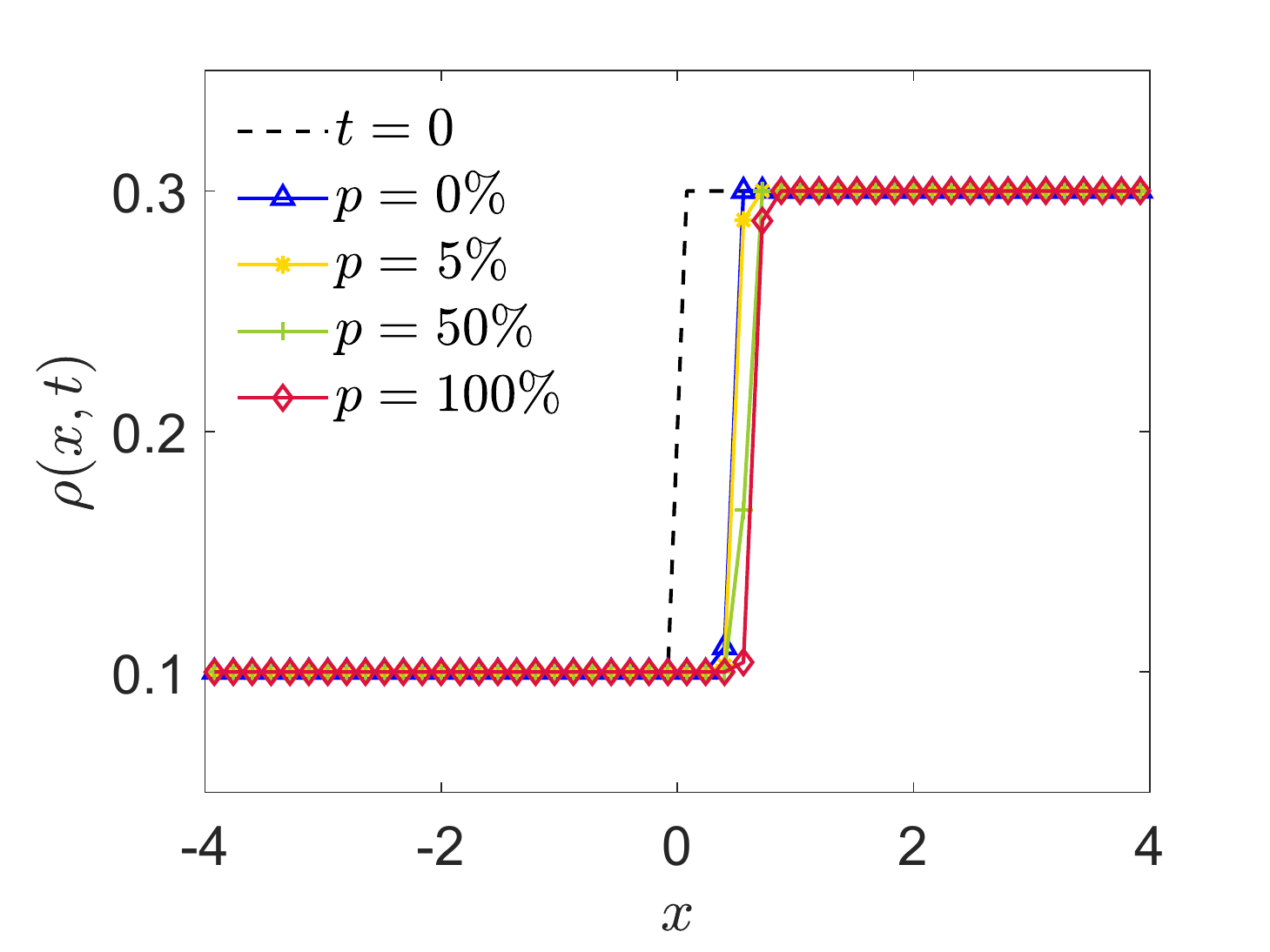}
\caption{Numerical solution at time $t=3$ of the first order macroscopic model for several choices of the penetration rate $p$. \textbf{Left}: $a=1$; \textbf{right}: $a=10$.}
\label{fig:first_fig2}
\end{figure}

In Figure~\ref{fig:first_fig2}, we show the numerical solution of the first order macroscopic model at time $t=3$ for $a=1,\,10$ and for several values of the penetration rate $p$. As predicted by the theory, a very small influence of $p$ can be detected on the macroscopic flow, see also Figure~\ref{fig:funddiag} in Section~\ref{sect:hydro_lim}.

\begin{remark}
The Monte Carlo solution of the kinetic model has been obtained using $10^5$ particles. At each time step, the macroscopic density has been reconstructed using $50$ grid points in the space interval $[-4,\,4]$ and integrating the numerical kinetic distribution function with respect to $s$ in the interval $[0,\,20]$ with $100$ grid points.
\end{remark}

\subsection{Second order hydrodynamics}
\label{sect:second.num}
We turn now to the second order hydrodynamic model~\eqref{eq:second.order} derived under the monokinetic \textit{ansatz}~\eqref{eq:f.monokin}. Pressureless systems of balance laws have been quite extensively investigated in recent years at both the analytical and the numerical levels, see e.g.~\cite{bouchut1994BOOK,bouchut2003SINUM}. These systems are weakly hyperbolic, thus, in the absence of source terms, the vacuum (i.e., $\rho=0$ in some space regions) may appear in finite time, thereby producing a propagation of the momentum at infinite speed. In principle, with standard finite volume methods a vanishing time step is necessary to guarantee the stability of the numerical scheme. In order to bypass such structural problems, in the following we will confine our numerical tests to short enough time horizons, however sufficient for a major comparison of the kinetic and the macroscopic solutions. In more detail, we solve numerically system~\eqref{eq:second.order} by means of a standard finite volume WENO scheme and a splitting of the transport and the relaxation dynamics. We consider the space domain $[-1,\,1]$, that we discretise by means of $50$ grid points. Moreover, we set $\Delta{t}=\Delta{x}/2$, which corresponds to a CFL number equal to $0.5$. On the other hand, for the numerical solution of the kinetic model we apply a Monte Carlo algorithm with $10^5$ particles in the space of the microscopic states $(x,\,s)\in [-1,\,1]\times [0,\,20]$. Then we reconstruct the hydrodynamic parameters on the same space mesh of the macroscopic model and integrating numerically the kinetic distribution function with respect to $s\in [0,\,20]$ with $10^3$ grid points.

We prescribe the following initial condition:
$$ f_0(x,\,s)=
	\begin{cases}
		0.16 & \text{if } (x,\,s)\in [-1,\,0]\times [0,\,5] \\
		0.02 & \text{if } (x,\,s)\in [0,\,1]\times [0,\,10] \\
		0 & \text{otherwise},
	\end{cases}
$$
which induces, at the macroscopic level,
\begin{align*}
	\rho_0(x) &= \int_0^{20}f_0(x,\,s)\,ds=
	\begin{cases}
		0.8 & \text{if } x\in [-1,\,0] \\
		0.2 & \text{if } x\in (0,\,1],
	\end{cases}
	\\
	h_0(x) &= \frac{1}{\rho_0(x)}\int_0^{20}sf_0(x,\,s)\,ds=
	\begin{cases}
		2.5 & \text{if } x\in [-1,\,0] \\
		5 & \text{if } x\in (0,\,1].
	\end{cases}
\end{align*}

\begin{figure}[!t]
\centering
\includegraphics[scale=0.5]{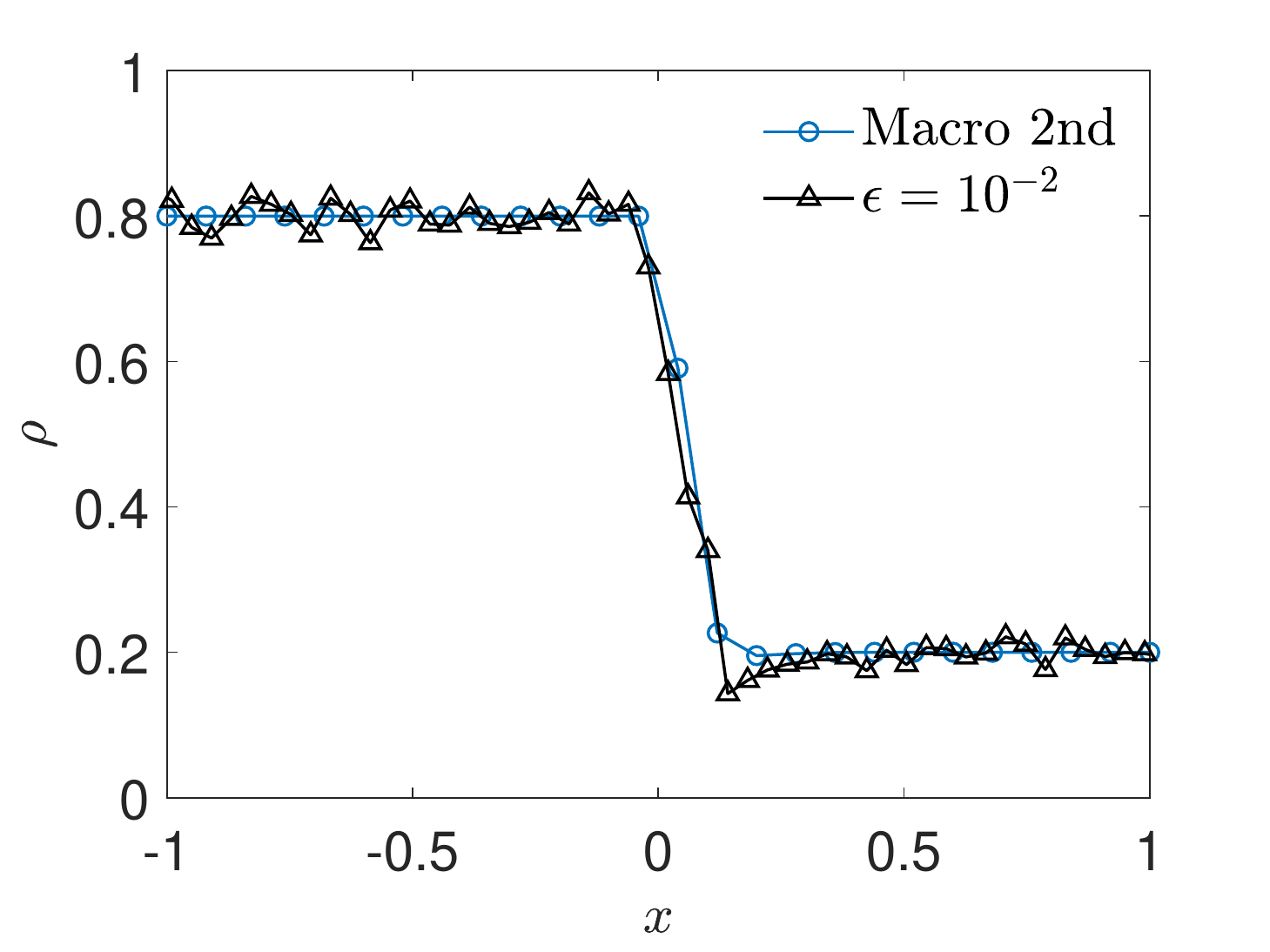}
\includegraphics[scale=0.5]{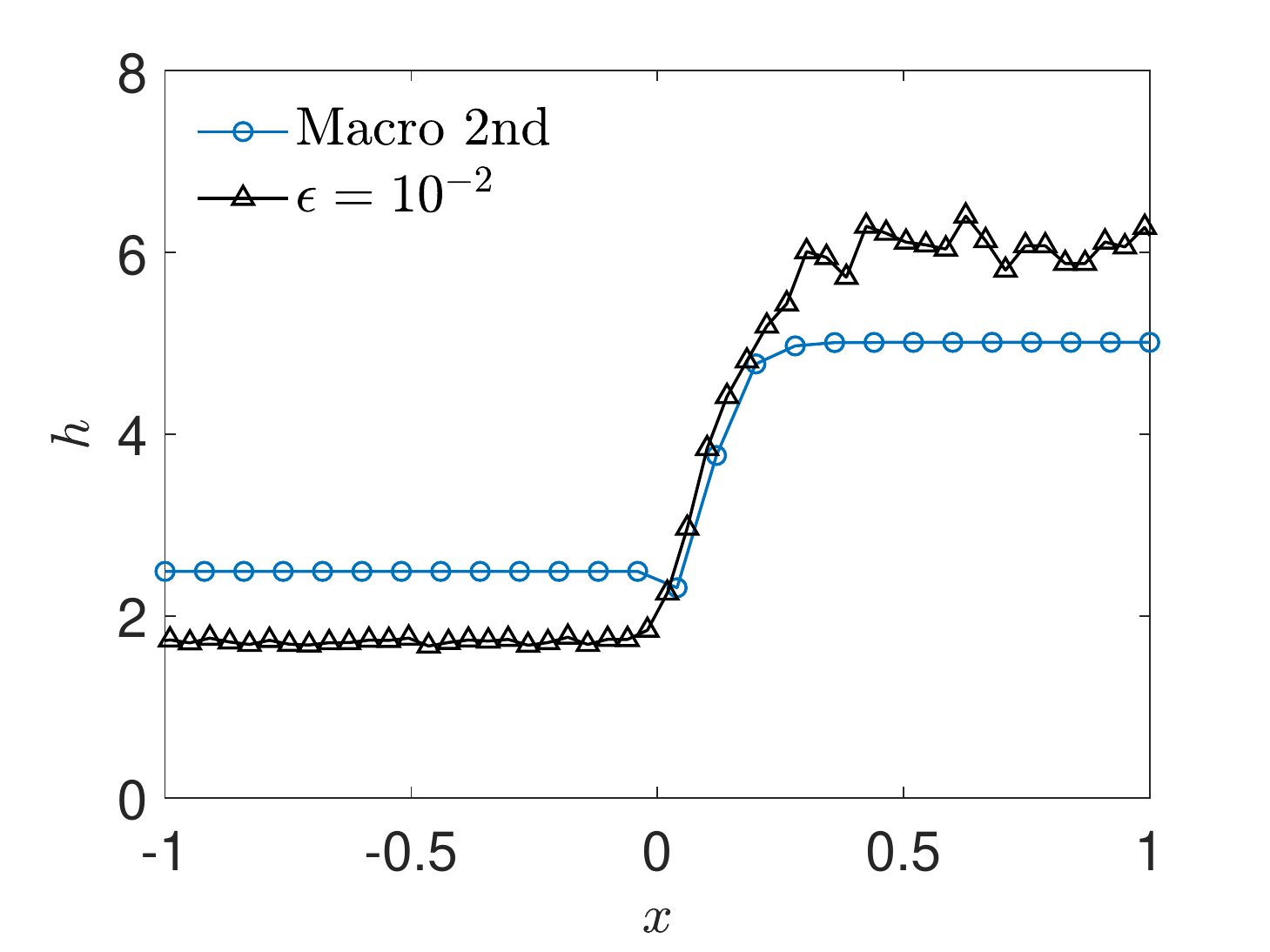} \\
\includegraphics[scale=0.5]{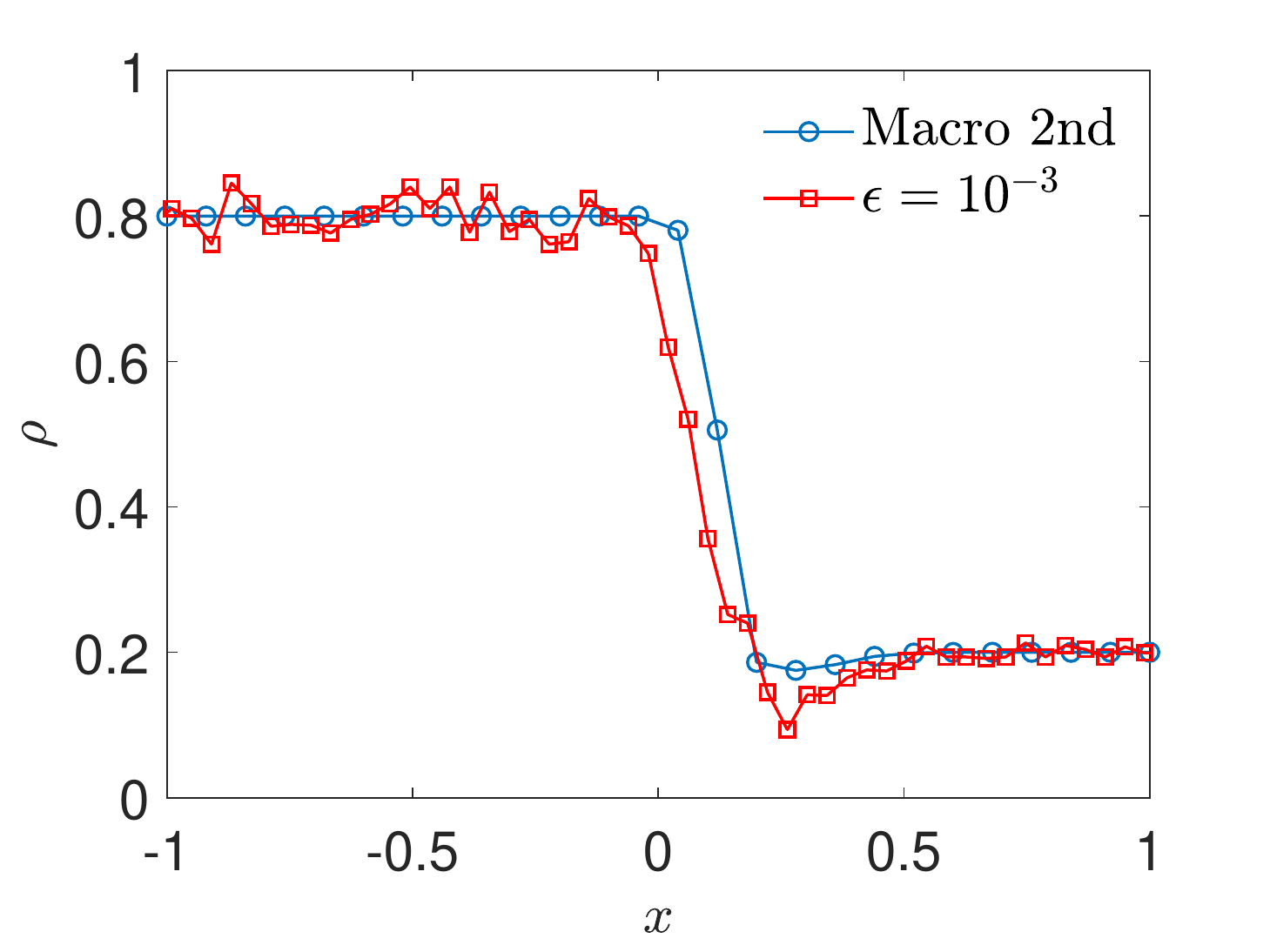}
\includegraphics[scale=0.5]{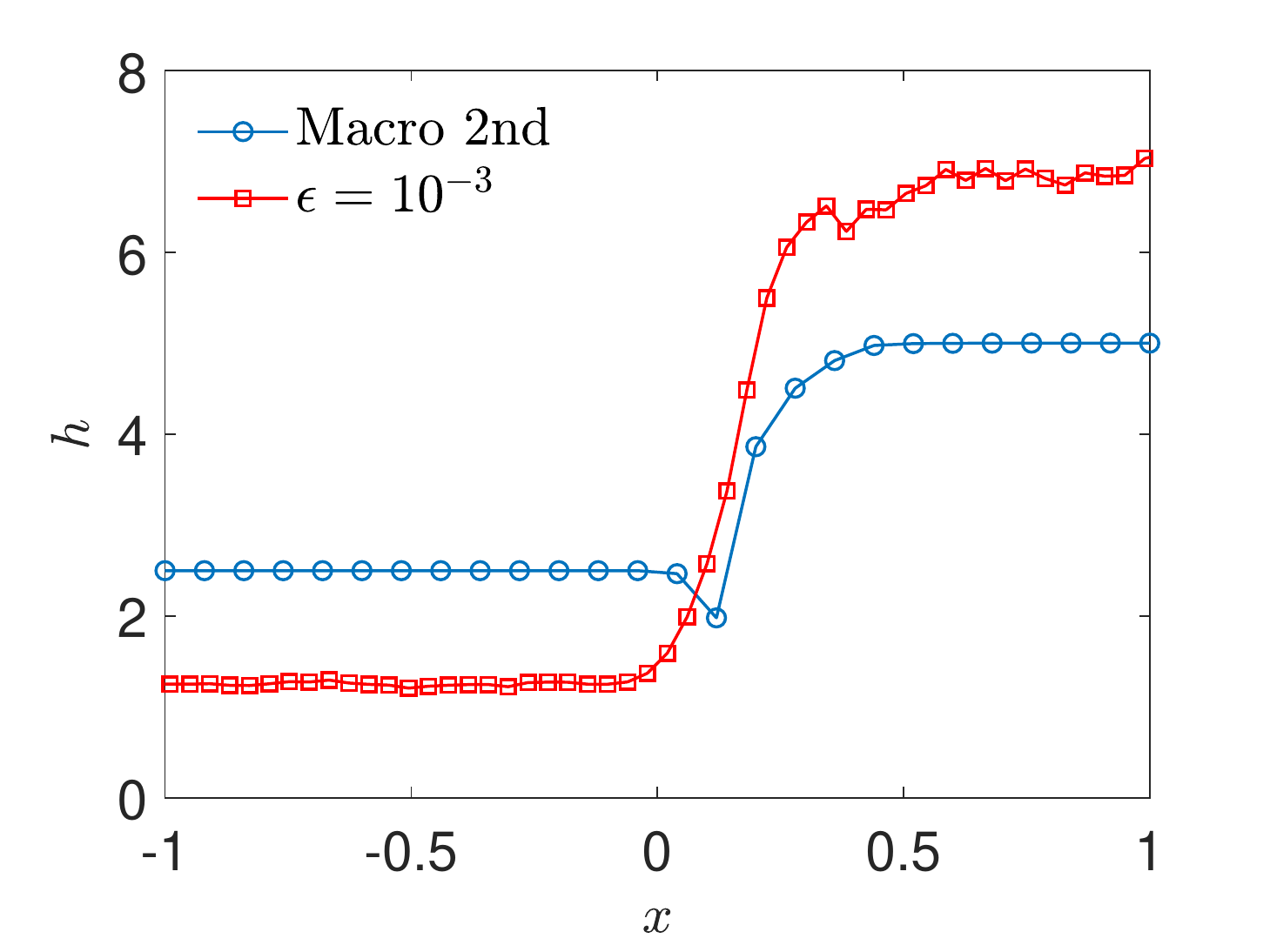}
\caption{Comparison of the kinetic model and the second order macroscopic model at time $t=2$ with the scaling $\epsilon=10^{-2}$ (\textbf{top row}) and $\epsilon=10^{-3}$ (\textbf{bottom row}). \textbf{Left column}: density $\rho(t,\,x)$; \textbf{right column}: mean headway $h(t,\,x)$.}
\label{fig:second_fig1}
\end{figure}

In Figure~\ref{fig:second_fig1} we compare the evolution of $\rho$, $h$ predicted by the second order hydrodynamic model~\eqref{eq:second.order} and the ones reconstructed from the kinetic model. In particular, we use the following values of the parameters: $a=30$, $\nu=\frac{1}{\epsilon}$, $\Delta{t}=\epsilon$, $\sigma^2=\Var(\eta)=0$ with $\epsilon=10^{-2}$ and $\epsilon=10^{-3}$, in order mimic ideally a pressureless hydrodynamic regime also at the level of the microscopic interactions, cf.~\eqref{eq:binary.controlled}. The numerical results confirm the expectation anticipated by the theory, cf. Remark~\ref{rem:second}: the second order hydrodynamic model~\eqref{eq:second.order} is only a rough approximation of the kinetic model quite independently of $\epsilon$, because the \textit{ansatz}~\eqref{eq:f.monokin} is not supported by the microscopic physics of the system.

\section{Conclusions}
\label{sect:conclusions}
In this paper we have investigated the impact of automatic cruise control systems at various traffic scales using a model-based Boltzmann-type kinetic theory approach. Specifically, we have provided a theoretical evidence of the fact that the controls aimed at optimising the headway and aligning the speed of the vehicles are effective on a local aggregate traffic scale but they virtually do not impact on the average fluid regime. In other words, the local statistical properties of the flow of vehicles, such as e.g. the variance of the headway and speed distributions, are optimised whereas no relevant differences are observed in the overall flux/throughput compared to a traffic stream composed of fully human-manned vehicles. Such a result supports nicely recent experimental findings~\cite{stern2018TRC}, which pointed out that the aforesaid types of controls are able to dampen local traffic instabilities, for instance stop-and-go waves ascribable to the heterogeneous distributions of headways and speeds, while being essentially non-intrusive in the macroscopic flow of the vehicles. This in turn demonstrates that those driver-assist controls can potentially mitigate road risk factors due to heterogeneous headways and speeds without affecting the efficiency of the global traffic flow.

The model-based approach proposed in this paper suggests that the methods of the kinetic theory are valid tools to unravel the multiscale aspects of driver-assist control strategies in a formally rigorous, organic and consistent way. We believe that these methods can be further developed to tackle more ambitious issues, such as the optimal design of vehicle-wise driver-assist algorithms able to induce prescribed desired effects at higher traffic scales.

\section*{Acknowledgements}
This research was partially supported by the Italian Ministry of Education, University and Research (MIUR) through the ``Dipartimenti di Eccellenza'' Programme (2018-2022) -- Department of Mathematical Sciences ``G. L. Lagrange'', Politecnico di Torino (CUP:E11G18000350001) and Department of Mathematics ``F. Casorati'', University of Pavia; and through the PRIN 2017 project (No. 2017KKJP4X) ``Innovative numerical methods for evolutionary partial differential equations and applications''.

This work is also part of the activities of the Starting Grant ``Attracting Excellent Professors'' funded by ``Compagnia di San Paolo'' (Torino) and promoted by Politecnico di Torino.

B.P. acknowledges the support of the National Science Foundation under the CPS Synergy Grant No. CNS-1837481.

A.T. and M.Z. are members of GNFM (Gruppo Nazionale per la Fisica Matematica) of INdAM (Istituto Nazionale di Alta Matematica), Italy.

\bibliographystyle{plain}
\bibliography{PbTaZm-hydroFTL_traffic}

\end{document}